\tikzset{hornNode/.style={draw=black!50,fill=black!5,line width=1pt,rounded corners=2pt},
         hornEdge/.style={line width=2pt,black!50,-latex},
         intHornEq/.style={line width=5pt,black!10}}
\lstdefinelanguage{scala}{
  alsoletter={@,=,>,:,-},
  morekeywords={let,abstract, case, catch, class, def, do, else, extends, final, finally, for, if, implicit, import, match, new, null, object, 
override, package, private, protected, requires, return, sealed, super, this, throw, trait, try, true, type, val, var, while, yield, domain, Boolean,
postcondition, precondition,invariant, constraint, assert, forAll, in, _, return, @generator, ensure, require, ensuring, given, have, =>, continue, returns, havoc,:-},
  sensitive=true,
  morecomment=[l]{//},
  morecomment=[s]{/*}{*/},
  morestring=[b]"
}
\newcommand{\codestyle}{\small\sffamily}
\tiny\color{gray},
\renewcommand{\emph}[1]{\textit {#1}}
\newcommand{\cfgstate}[1]{*++=<1pc>[o][F-]{#1}}
\newif\ifLongVersion\LongVersionfalse
\newcommand{\constraint}{constraint}
\newcommand{\constraintLang}{\constraint\ language}
\newcommand{\Constraint}{Constraint}
\newcommand{\ConstraintLang}{\Constraint\ language}
\newcommand{\ConstraintLangs}{\ConstraintLang s}
\newcommand{\Con}{\ensuremath{\mathit{Constr}}} 
\newcommand{\constraints}{\constraint s}
\newcommand{\relsym}{relation symbol}
\newcommand{\relsyms}{\relsym s}
\newcommand{\unicl}{\ensuremath{\mathit{Cl}_\forall}}
\newcommand{\fv}[1]{\ensuremath{\mathit{fv}(#1)}}
\newcommand{\ClauseSet}{{\cal HC}}
\DeclareRobustCommand{\myEnsuremath}{%
  \ifmmode
    \expandafter\@firstofone
  \else
    \expandafter\@myEnsuredmath
  \fi}
\long\def\@myEnsuredmath#1{\m{\relax#1}}
\newcommand{\m}[1]{\mbox{$#1$}}   
\title{The Relationship between Craig Interpolation and Recursion-Free Horn Clauses}
\author{Philipp R\"ummer\inst{1} 
\and Hossein Hojjat\inst{2} \and 
  Viktor Kuncak\inst{2}}
\authorrunning{R\"ummer, Hojjat, Kuncak}
\institute{Uppsala University, Sweden \and
Swiss Federal Institute of Technology Lausanne (EPFL)}
\begin{document}
\maketitle

\begin{abstract}
  Despite decades of research, there are still a number of concepts
  commonly found in software programs that are considered challenging
  for verification: among others, such concepts include concurrency,
  and the compositional analysis of programs with procedures.  As a
  promising direction to overcome such difficulties, recently the use
  of Horn constraints as intermediate representation of software
  programs has been proposed. Horn constraints are related to Craig
  interpolation, which is one of the main techniques used to construct
  and refine abstractions in verification, and to synthesise inductive
  loop invariants. We give a survey of the different forms of Craig
  interpolation found in literature, and show that all of them
  correspond to natural fragments of (recursion-free) Horn
  constraints. We also discuss techniques for solving systems of
  recursion-free Horn constraints.
\end{abstract}

\section{Introduction}

Predicate abstraction \cite{GrafSaidi97ConstructionAbstractStateGraphsPVS} has emerged as a prominent and effective way for model checking software systems. 
A key ingredient in predicate abstraction is analyzing the spurious counter-examples to refine abstractions \cite{BallETAL02RelativeCompletenessAbstractionRefinement}. 
The refinement problem saw a significant progress when Craig interpolants extracted from unsatisfiability proofs were used as relevant predicates \cite{Henzinger:2004}.
While interpolation has enjoyed a significant progress for various logical constraints \cite{DBLP:journals/tocl/CimattiGS10,bonacina12,iprincess2011,tree-interpolants}, 
there have been substantial proposals for more general forms of interpolation \cite{tree-interpolants,DBLP:conf/popl/HeizmannHP10,DBLP:conf/sas/AlbarghouthiGC12}.

As a promising direction to extend the reach of automated verification
methods to programs with procedures, and concurrent programs, among
others, recently the use of Horn constraints as intermediate
representation has been proposed
\cite{DBLP:conf/popl/GuptaPR11,DBLP:conf/pldi/GrebenshchikovLPR12,duality}.
This report examines the relationship between various forms of Craig
interpolation and syntactically defined fragments of recursion-free
Horn clauses. We systematically examine binary interpolation,
inductive interpolant sequences, tree interpolants, restricted DAG
interpolants, and disjunctive interpolants, and show the
recursion-free Horn clause problems to which they correspond.  We
present algorithms for solving each of these classes of problems by
reduction to elementary interpolation problems. We also give a
taxonomy of the various interpolation problems, and the corresponding
systems of Horn clauses, in terms of their computational complexity.



\section{Related Work}

The use of \textbf{Horn clauses} as intermediate representation for
verification was proposed in \cite{lopstr07}.
The authors is \cite{DBLP:conf/popl/GuptaPR11} use Horn clauses 
for verification of multi-threaded programs.
The underlying procedure for solving sets of recursion-free Horn clauses,
over the combined theory of linear integer arithmetic and
uninterpreted functions, was presented in \cite{DBLP:conf/aplas/GuptaPR11}. 
A range of further applications of Horn clauses, including inter-procedural model checking, 
was given in \cite{DBLP:conf/pldi/GrebenshchikovLPR12}. 
Horn clauses are also proposed as intermediate/exchange format for verification problems in
\cite{verificationAsSMT}, and are natively supported by the SMT solver Z3~\cite{Moura:2008}.

There is a long line of research on \textbf{Craig interpolation}
methods, and generalised forms of interpolation, tailored to
verification. For an overview of interpolation in the presence of
theories, we refer the reader to
\cite{DBLP:journals/tocl/CimattiGS10,iprincess2011}. Binary Craig
interpolation for implications~$A \to C$ goes back to
\cite{craig1957linear}, was carried over to conjunctions~$A \wedge B$
in \cite{DBLP:conf/cav/McMillan03}, and generalised to inductive
sequences of interpolants
in~\cite{Henzinger:2004,DBLP:conf/cav/McMillan06}. The concept of tree
interpolation, strictly generalising inductive sequences of
interpolants, is presented in the documentation of the interpolation
engine iZ3~\cite{tree-interpolants}; the computation of tree
interpolants by computing a sequence of binary interpolants is also
described in \cite{DBLP:conf/popl/HeizmannHP10}. Restricted DAG
interpolants~\cite{DBLP:conf/sas/AlbarghouthiGC12} and disjunctive
interpolants~\cite{disjInterpolantsTR} are a further generalisation of
inductive sequences of interpolants, designed to enable the
simultaneous analysis of multiple counterexamples or program
paths.

The use of Craig interpolation for solving Horn clauses is discussed
in \cite{duality}, concentrating on the case of tree interpolation.
Our paper extends this work by giving a systematic study of the
relationship between different forms of Craig interpolation and Horn
clauses, as well as general results about solvability and
computational complexity, independent of any particular calculus used
to perform interpolation. 

\textbf{Inter-procedural software model checking} with interpolants
has been an active area of research for the last decade.  In the
context of predicate abstraction, it has been discussed how
well-scoped invariants can be inferred~\cite{Henzinger:2004} in the
presence of function calls. Based on the concept of Horn clauses, a
predicate abstraction-based algorithm for bottom-up construction of
function summaries was presented in
\cite{DBLP:conf/pldi/GrebenshchikovLPR12}.  Generalisations of the
Impact algorithm~\cite{DBLP:conf/cav/McMillan06} to programs with
procedures are given in \cite{DBLP:conf/popl/HeizmannHP10} (formulated
using nested word automata) and
\cite{DBLP:conf/vmcai/AlbarghouthiGC12}.  Finally, function summaries
generated using interpolants have also been used to speed up bounded
model checking~\cite{sfs2012}.

Several other tools handle procedures by increasingly inlining and performing
under and/or over-approximation \cite{SuterETAL11SatisfiabilityModuloRecursivePrograms,
DBLP:conf/cav/LalQL12,DBLP:journals/ase/TaghdiriJ07}, but without the use 
of interpolation techniques.


\section{Example}
\label{sec:example}

\begin{figure}[tb]
\begin{minipage}{3.3cm}
\begin{footnotesize}
\begin{lstlisting}
def f(n : Int)
     returns rec : Int = {
  if (n > 0) {
    tmp = f(n-1)
    rec = tmp  + 1
  } else { 
    rec = 1
  }
}
def main() {
  var res : Int
  havoc(x: Int $\ge$ 0)
  res = f(x)  
  assert(res == x + 1)
}
\end{lstlisting}
\end{footnotesize}
\end{minipage}
~~
\begin{minipage}{5.2cm}
\xymatrix@1@=20pt{
\xyoption{frame}
&&\ar[d]&&&&\ar[d]\\
&&\cfgstate{q_1} \ar[d]_{\txt{{\small $\mbox{havoc }(x)\wedge x'\ge 0$}}} &&&&
\cfgstate{q_5} \ar@/_1pc/[dll]_{\txt{{\small $n{>}0$}}}
\ar@/_-1pc/[dr]^{\txt{{\small $\neg(n{>}0)$}}}\\
&& \cfgstate{q_2} \ar[d]_{\txt{{\small $res'{=}f(x)$}}}
&&
\cfgstate{q_6} \ar[d]^{\txt{{\small $tmp'{=}f(n-1)$}}}
&&&
\cfgstate{q_8} \ar@/_-1pc/[ddl]^{\txt{{\small $rec'{=}1$}}}\\
&& \cfgstate{q_3} \ar[dr]|{\txt{{\small $res{\neq}x+1$}}}
\ar[dl]_{\txt{{\small $res{=}x+1$}}}
&&
\cfgstate{q_7} \ar@/^-0.5pc/[drr]_{\txt{{\small $rec'{=}tmp + 1$}}}
\\
&\cfgstate{q_4}
&&\cfgstate{e}
&&& \cfgstate{q_9}
}
\end{minipage}
\caption{A recursive program and its control flow graph (see
  Sect.~\ref{sec:example}).\label{fig::example}}

\medskip
\begin{lstlisting}[emph={true,false},emphstyle={\bfseries}]
(1)   r1(X, Res)   $\expl~$ true
(2)   r2(X', Res)  $\expl~$  r1(X, Res) $\pand$ X' $\ge$ 0
(3)   r3(X, Res')  $\expl~$  r2(X, Res) $\pand$ rf(X, Res')
(4)   r4(X, Res)   $\expl~$  r3(X, Res) $\pand$ Res $=$ X + 1
(5)   false        $\expl~$  r3(X, Res) $\pand$ Res $\not=$ X + 1

(6)   r5(N, Rec, Tmp)   $\expl~$ true
(7)   r6(N, Rec, Tmp)   $\expl~$ r5(N, Rec, Tmp) $\pand$ N > 0
(8)   r7(N, Rec, Tmp')  $\expl~$ r6(N, Rec, Tmp) $\pand$ rf(N - 1, Tmp')
(9)   r8(N, Rec, Tmp)   $\expl~$ r5(N, Rec, Tmp) $\pand$ N $\le$ 0
(10)  r9(N, Rec', Tmp)  $\expl~$ r7(N, Rec, Tmp) $\pand$ Rec' = Tmp + 1
(11)  r9(N, Rec', Tmp)  $\expl~$ r8(N, Rec, Tmp) $\pand$ Rec' = 1
(12)  rf(N, Rec)        $\expl~$ r9(N, Rec, Tmp)
\end{lstlisting}
\caption{The encoding of the program in Fig.~\ref{fig::example} into a
  set of recursive Horn clauses.\label{fig::exampleClauses}}
\end{figure}

\begin{figure}[tb]
\begin{align*}
  r_1(x, \mathit{res}) &~~\equiv~~ \mathit{true} \\
  r_2(x, \mathit{res}) &~~\equiv~~ x \geq 0 \\
  r_3(x, \mathit{res}) &~~\equiv~~ \mathit{res} = x + 1 \\
  r_4(x, \mathit{res}) &~~\equiv~~ \mathit{true} \\
  r_5(n, \mathit{rec}, \mathit{tmp}) &~~\equiv~~ \mathit{true} \\
  r_6(n, \mathit{rec}, \mathit{tmp}) &~~\equiv~~ n \geq 1 \\
  r_7(n, \mathit{rec}, \mathit{tmp}) &~~\equiv~~ n = \mathit{tmp} \\
  r_8(n, \mathit{rec}, \mathit{tmp}) &~~\equiv~~ n \leq 0 \\
  r_9(n, \mathit{rec}, \mathit{tmp}) &~~\equiv~~
         \mathit{rec} = n + 1 \vee (n \leq 0 \wedge \mathit{rec} = 1) \\
  r_f(n, \mathit{rec}) &~~\equiv~~
         \mathit{rec} = n + 1 \vee (n \leq 0 \wedge \mathit{rec} = 1)
\end{align*}
  \caption{Syntactic solution of the Horn clauses in Fig.~\ref{fig::exampleClauses}.}
  \label{fig:exampleSolution}
\end{figure}

We start with an example illustrating the use of Horn clauses to
verify a recursive program. Fig.~\ref{fig::example} shows an example
of a recursive program, which is encoded as a
set of (recursive) Horn constraints in Fig.~\ref{fig::exampleClauses}.
The function {\ttfamily f} recursively computes the increment of the argument {\ttfamily n} by $1$.

For translation to Horn clauses we assign an uninterpreted relation symbol {\ttfamily r$i$} to each state $q_i$ of the control flow graph. 
The arguments of the relation symbol {\ttfamily r$i$} act as placeholders of the visible variables in the state $q_i$. 
The relation symbol {\ttfamily rf} corresponds to the summary of the function {\ttfamily f}.
In the relation symbol {\ttfamily rf} we do not include the local variable {\ttfamily tmp} in the arguments since it is invisible from outside the function {\ttfamily f}. 
The first argument of {\ttfamily rf} is the input and the second one is the output.
We do not dedicate any relation symbol to the error state $e$.

The initial states of the functions are not constrained at the beginning; they are just implied by $\mathit{true}$. 
The clause that has $\mathit{false}$ as its head corresponds to the assertion in the program. 
In order to satisfy the assertion with the head $\mathit{false}$, the body of the clause should also be evaluated to $\mathit{false}$. 
We put the condition leading to error in the body of this clause to ensure the error condition is not happening.
The rest of the clauses are one to one translation of the edges in the control flow graph.

For the edges with no function calls we merely relate the variables in the previous state to the variables in the next state using the transfer functions on the edges. 
For example, the clause $(2)$ expresses that $\mathit{res}$ is kept unchanged in the transition from $q_1$ to $q_2$ and the value of {\ttfamily x} is greater than or equal to $0$ in $q_2$. 
For the edges with function call we should also take care of the passing arguments and the return values. 
For example, the clause $(3)$ corresponds to the edge containing a function call from $q_2$ to $q_3$.
This clause sets the value of {\ttfamily res} in the state $q_3$ to the return value of the function {\ttfamily f}.
Note that the only clauses in this example that have more than one relation symbols in the body are the ones related to edges with function calls.

The solution of the obtained system of Horn clauses demonstrates the correctness of the program. 
In a  solution each relation symbol is mapped to an expression over its arguments. 
If we replace the relation symbols in the clauses by the expressions in the solution we should obtain only valid clauses. 
In a system with a genuine path to error we cannot find any solution to the system since we have no way to satisfy the assertion clause. 
Fig.~\ref{fig:exampleSolution} gives one
possible solution of the Horn clauses in terms of concrete formulae,
found by our verification tool
Eldarica.\footnote{\url{http://lara.epfl.ch/w/eldarica}}

This paper discusses techniques to automatically construct solutions
of Horn clauses. Although the Horn clauses encoding programs are
typically recursive, it has been observed that the case of
\emph{recursion-free} Horn clauses is instrumental for constructing
verification procedures operating on Horn
clauses~\cite{DBLP:conf/popl/GuptaPR11,DBLP:conf/pldi/GrebenshchikovLPR12,duality}. Sets
of recursion-free Horn clauses are usually extracted from recursive
clauses by means of finite unwinding; examples are given in
Sect.~\ref{sec:tree-interpolants} and \ref{sec:disjInterpolants}.



\section{Formulae and Horn Clauses\label{horn}}

\paragraph{\ConstraintLangs.}
Throughout this paper, we assume that a first-order vocabulary of
\emph{interpreted symbols} has been fixed, consisting of a set~$\cal
F$ of fixed-arity function symbols, and a set~$\cal P$ of fixed-arity
predicate symbols. Interpretation of $\cal F$ and $\cal P$ is
determined by a class~${\cal S}$ of structures $(U, I)$
consisting of non-empty universe~$U$, and a mapping~$I$ that assigns
to each function in $\cal F$ a set-theoretic function over $U$, and to
each predicate in $\cal P$ a set-theoretic relation over $U$. As a
convention, we assume the presence of an equation symbol~``$=$'' in $\cal
P$, with the usual interpretation.  Given a countably infinite
set~$\cal X$ of variables, a \emph{\constraintLang} is a
set~\Con\ of first-order formulae over $\cal F, P, X$
For example, the language of quantifier-free Presburger arithmetic has
${\cal F} = \{+, -, 0, 1, 2, \ldots\}$ and ${\cal P} = \{=, \leq,
|\}$).

A \constraint\ is called \emph{satisfiable} if it holds for some
structure in $\cal S$ and some assignment of the variables~$\cal X$,
otherwise \emph{unsatisfiable}. We say that a set~$\Gamma \subseteq \Con$ of
\constraints\ \emph{entails} a \constraint~$\phi \in \Con$ if
every structure and variable assignment that satisfies all
\constraints\ in $\Gamma$ also satisfies $\phi$; this is denoted by
$\Gamma \models \phi$.

\fv{\phi} denotes the set of free variables in \constraint~$\phi$.
We write $\phi[x_1, \ldots, x_n]$ to state that a \constraint\
contains (only) the free variables $x_1, \ldots, x_n$, and $\phi[t_1,
\ldots, t_n]$ for the result of substituting the terms~$t_1, \ldots,
t_n$ for $x_1, \ldots, x_n$.  Given a \constraint~$\phi$ containing
the free variables~$x_1, \ldots, x_n$, we write~$\unicl (\phi)$ for
the \emph{universal closure} $\forall x_1, \ldots, x_n. \phi$.

\paragraph{Craig interpolation} is the main technique used to
construct and refine abstractions in software model checking.  A
binary interpolation problem is a conjunction~$A \wedge B$ of
constraints. A \emph{Craig interpolant} is a \constraint~$I$ such that
$A \models I$ and $B \models \neg I$, and such that $\fv{I} \subseteq
\fv{A} \cap \fv{B}$. The existence of an interpolant implies that $A
\wedge B$ is unsatisfiable. We say that a \constraintLang\ has the
\emph{interpolation property} if also the opposite holds: whenever $A
\wedge B$ is unsatisfiable, there is an interpolant~$I$. 

\subsection{Horn Clauses}

To define the concept of Horn clauses, we fix a set~$\cal R$ of
uninterpreted fixed-arity \emph{\relsyms,} disjoint from $\cal P$ and $\cal F$. A \emph{Horn clause} is a
formula~$C \wedge B_1 \wedge \cdots \wedge B_n \to H$
where
\begin{itemize}
\item $C$ is a \constraint\ over $\cal F, P, X$;
\item each $B_i$ is an application~$p(t_1, \ldots, t_k)$ of a \relsym\
  $p \in \cal R$ to first-order terms over $\cal F, X$;
\item $H$ is similarly either an
  application~$p(t_1, \ldots, t_k)$ of $p \in \cal R$ to first-order terms,
  or is the \constraint\ $\mathit{false}$.
\end{itemize}
$H$ is called the \emph{head} of the clause, $C \wedge B_1 \wedge
\cdots \wedge B_n$ the \emph{body.} In case $C = \mathit{true}$, we
usually leave out $C$ and just write $B_1 \wedge \cdots \wedge B_n \to
H$. First-order variables (from $\cal X$) in a clause 
are considered implicitly universally quantified; \relsyms\ represent
set-theoretic relations over the universe~$U$ of a structure~$(U, I)
\in \cal S$. Notions like (un)satisfiability and entailment
generalise straightforwardly to formulae with \relsyms.

A \emph{\relsym\ assignment} is a mapping~$\mathit{sol} : {\cal R} \to
\Con$ that maps each $n$-ary \relsym~$p \in \cal R$ to a
\constraint~$\mathit{sol}(p) = C_p[x_1, \ldots, x_n]$ with $n$ free
variables. The \emph{instantiation}~$\mathit{sol}(h)$ of a Horn
clause~$h$ is defined by:
\begin{align*}
  \mathit{sol}\big(C \wedge p_1(\bar t_1) \wedge \cdots \wedge p_n(\bar t_n)
         \to p(\bar t)\big) &~=~
  C \wedge \mathit{sol}(p_1)[\bar t_1] \wedge \cdots \wedge
    \mathit{sol}(p_n)[\bar t_n] \to \mathit{sol}(p)[\bar t]
    \\
  \mathit{sol}\big(C \wedge p_1(\bar t_1) \wedge \cdots \wedge p_n(\bar t_n)
         \to \mathit{false}\big) &~=~
  C \wedge \mathit{sol}(p_1)[\bar t_1] \wedge \cdots \wedge
    \mathit{sol}(p_n)[\bar t_n] \to \mathit{false}
\end{align*}

\begin{definition}[Solvability]
  Let $\ClauseSet$ be a set of Horn clauses over \relsyms\ $\cal R$.
  \begin{enumerate}
  \item $\ClauseSet$ is called \emph{semantically solvable} if for
    every structure $(U, I) \in \cal S$ there is an interpretation of
    the \relsyms~$\cal R$ as set-theoretic relations over $U$ such
    the universally quantified closure~$\unicl(h)$ of every clause~$h \in \ClauseSet$ 
    holds in $(U,I)$.
  \item A $\ClauseSet$ is called \emph{syntactically solvable} if
    there is a \relsym\ assignment $\mathit{sol}$ such that for every
    structure $(U, I) \in \cal S$ and every clause~$h \in \ClauseSet$
    it is the case that $\unicl(\mathit{sol}(h))$ is satisfied.
  \end{enumerate}
\end{definition}

Note that, in the special case when $\cal S$ contains only
one structure, ${\cal S} = \{ (U,I) \}$, semantic
solvability reduces to the existence of relations
interpreting ${\cal R}$ that extend the structure $(U,I)$ in
such a way to make all clauses true. In
other words, Horn clauses are solvable in a structure if and
only if the extension of the theory of $(U,I)$ by relation symbols ${\cal R}$ 
in the vocabulary and by given Horn clauses as axioms is consistent.

A set~$\ClauseSet$ of Horn clauses induces a \emph{dependence
  relation}~$\to_{\ClauseSet}$ on $\cal R$, defining $p \to_{\ClauseSet} q$ if
there is a Horn clause in $\ClauseSet$ that contains $p$ in its head, and
$q$ in the body. The set~$\ClauseSet$ is called \emph{recursion-free} if
$\to_{\ClauseSet}$ is acyclic, and \emph{recursive} otherwise.  In the
next sections we study the solvability problem for recursion-free Horn
clauses and then show how to use such results in general Horn clause
verification
systems.



\section{Generalised Forms of Craig Interpolation}

\begin{table}[tb]
  \begin{center}
    \begin{tabular}{@{~~}p{0.36\linewidth}@{\qquad}p{0.55\linewidth}@{~~}}
      \textbf{Form of interpolation} & \textbf{Fragment of Horn clauses}
      \\[1ex]\hline
      \raisebox{3ex}{}%
      Binary interpolation \cite{craig1957linear,DBLP:conf/cav/McMillan03}\newline
      $A \wedge B$ & 
      Pair of Horn clauses\newline
      $A \to p(\bar x),~ B \wedge p(\bar x) \to \mathit{false}$
      with $\{\bar x\} = \fv{A} \cap \fv{B}$
      \\[1ex]\hline
      \raisebox{3ex}{}%
      Inductive interpolant seq.~\cite{Henzinger:2004,DBLP:conf/cav/McMillan06}\newline
      $T_1 \wedge T_2  \wedge \cdots \wedge T_n$ &
      Linear tree-like Horn clauses\newline
      $T_1 \to p_1(\bar x_1),~~
      p_1(\bar x_1) \wedge T_2  \to p_2(\bar x_2),~~ \ldots$\newline
      with $\{\bar x_i\} = \fv{T_1, \ldots, T_i} \cap \fv{T_{i+1}, \ldots, T_n}$
      \\[1ex]\hline
      \raisebox{3ex}{}%
      Tree interpolants~\cite{tree-interpolants,DBLP:conf/popl/HeizmannHP10} &
      Tree-like Horn clauses
      \\[1ex]\hline
      \raisebox{3ex}{}%
      Restricted
      DAG interpolants~\cite{DBLP:conf/sas/AlbarghouthiGC12} &
      Linear Horn clauses
      \\[1ex]\hline
      \raisebox{3ex}{}%
      Disjunctive interpolants~\cite{disjInterpolantsTR} &
      Body disjoint Horn clauses
    \end{tabular}
  \end{center}
  \caption{Equivalence of interpolation problems and
    systems of Horn clauses.}
  \label{tab:interpolation-horn}
\end{table}

It has become common to work with generalised forms of Craig
interpolation, such as inductive sequences of interpolants, tree
interpolants, and restricted DAG interpolants. We show that a variety
of such interpolation approaches can be reduced to recursion-free Horn
clauses. Recursion-free Horn clauses thus provide a general framework unifying
and subsuming a number of earlier notions. As a side effect, we can
formulate a general theorem about existence of the individual kinds of
interpolants in Sect.~\ref{sec:complexity}, applicable to any
\constraintLang\ with the (binary) interpolation property.

An overview of the relationship between specific forms of
interpolation and specific fragments of recursions-free Horn clauses
is given in Table~\ref{tab:interpolation-horn}, and will be explained
in more detail in the rest of this
section. Table~\ref{tab:interpolation-horn} refers to the following
fragments of recursion-free Horn clauses:
\begin{definition}[Horn clause fragments]
  We say that a finite, recursion-free set~$\ClauseSet$ of Horn clauses
  \begin{enumerate}
  \item is \emph{linear} if the body of each Horn clause contains at
    most one \relsym,
  \item is \emph{body-disjoint} if for each \relsym~$p$ there is
    at most one clause containing $p$ in its body; furthermore, every
    clause contains $p$ at most once;
  \item is \emph{head-disjoint} if for each \relsym~$p$ there is at
    most one clause containing $p$ in its head;
  \item is \emph{tree-like}~\cite{DBLP:conf/aplas/GuptaPR11} if it
    is body-disjoint and head-disjoint.
  \end{enumerate}
\end{definition}


\begin{theorem}[Interpolation and Horn clauses]
  For each line of Table~\ref{tab:interpolation-horn} it holds that:
  \begin{enumerate}
  \item an interpolation problem of the stated form can be
    polynomially reduced to (syntactically) solving a set of Horn
    clauses, in the stated fragment;
  \item solving a set of Horn clauses (syntactically) in the stated
    fragment can be polynomially reduced to solving a sequence of
    interpolation problems of the stated form.
  \end{enumerate}
\end{theorem}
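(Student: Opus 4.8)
The plan is to prove the two reductions separately for each of the five rows, but to organise them around a single observation: a \relsym\ assignment~$\mathit{sol}$ solves a recursion-free set~$\ClauseSet$ exactly when the instantiation~$\unicl(\mathit{sol}(h))$ of every clause~$h$ is valid, and each such instantiation is itself an elementary implication between the instantiated body and head. Reading the solution~$\mathit{sol}(p)$ of a \relsym~$p$ as an interpolant, validity of a clause $C \wedge p_1(\bar t_1) \wedge \cdots \wedge p_n(\bar t_n) \to p(\bar t)$ is precisely the condition that $\mathit{sol}(p)$ separate the facts flowing into $p$ from the facts that must not be derivable. This local, per-clause view is what lets me line up each Horn fragment with its corresponding (generalised) interpolation problem.

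For direction~(1) (interpolation $\to$ Horn clauses) I read the stated encoding off the table directly. For binary interpolation the two clauses $A \to p(\bar x)$, $B \wedge p(\bar x) \to \mathit{false}$ with $\{\bar x\} = \fv{A}\cap\fv{B}$ are manifestly a pair; a syntactic solution assigns to $p$ a \constraint~$\mathit{sol}(p)[\bar x]$ whose two validity conditions are exactly $A \models \mathit{sol}(p)$ and $B \models \neg\mathit{sol}(p)$, i.e.\ an interpolant, with the free-variable restriction enforced by the argument list of $p$. The remaining rows are analogous: an inductive sequence yields a chain of clauses (linear tree-like), a tree problem yields one clause per tree node (tree-like), a DAG yields linear clauses, and a disjunctive problem yields body-disjoint clauses. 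In each case I check two routine, purely syntactic facts: that the constructed clauses genuinely lie in the claimed fragment (a matter of counting \relsym\ occurrences in heads and bodies), and that the interpolant and scope conditions coincide with the clause-validity conditions under the identification above. The constructions are local and size-preserving, hence polynomial.

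For direction~(2) (Horn clauses $\to$ interpolation) I exploit the structural restrictions of each fragment to expose the shape of the dependence relation~$\to_{\ClauseSet}$, which is acyclic by recursion-freeness. In a linear tree-like set each \relsym\ occurs in at most one head and at most one body, so $\to_{\ClauseSet}$ decomposes into chains, and after renaming the auxiliary variables of distinct clauses apart I read off an inductive interpolant sequence whose shared-variable sets are exactly the argument tuples~$\bar x_i$. Tree-like sets similarly unfold to a forest, giving a tree interpolation problem. The algorithm then calls the assumed generalised interpolation procedure and installs the returned interpolant for each node as the solution of the associated \relsym; clause validity follows from the defining conditions of the interpolation notion, and the free-variable condition on interpolants delivers the required $\fv{\mathit{sol}(p)} \subseteq \{\bar x\}$.

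The main obstacle is direction~(2) for the two fragments that are not tree-like, namely linear clauses (restricted DAG interpolants) and body-disjoint clauses (disjunctive interpolants). Here a single \relsym\ may occur in several clauses, so the expanded problem is a genuine DAG rather than a tree, and a naive unfolding into a tree would duplicate shared sub-structure and break the polynomial bound. The resolution is exactly the reason these generalised interpolation notions were introduced: I keep the DAG (resp.\ the disjunctive branching) intact and appeal to the matching generalised interpolation problem, which by definition returns one interpolant \emph{per \relsym} that is simultaneously consistent with all its incident clauses. The two delicate points I expect to spend the most care on are (i) the variable bookkeeping that guarantees the only variables shared across the cut defining each \relsym's interpolant are its formal arguments~$\bar x$, achieved by renaming apart the local variables of different clauses before forming the problem; and (ii) verifying that the interpolant returned for a node with several incident clauses really validates every one of them at once, which is where body-disjointness (a unique consumer) and the acyclic DAG structure are used to rule out conflicting constraints.
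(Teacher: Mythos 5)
Your proposal follows essentially the same route as the paper: both directions are established row by row, direction~(1) being exactly the encodings listed in the table, and direction~(2) proceeding by normalising the clauses (fixing one argument vector~$\bar x_p$ per \relsym, renaming non-argument variables apart), splitting the $\to_{\ClauseSet}$-graph into connected components, and handing each component to the matching generalised interpolation procedure; like the paper, you keep the DAG and disjunctive structure intact instead of unfolding to trees, which is precisely what preserves the polynomial bound.

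There is, however, one concrete point where your stated fix is insufficient. In the reduction from linear Horn clauses to restricted DAG interpolants, renaming variables apart only ensures that no \emph{extra} variables can cross the cut at a \relsym~$p_v$; it does not ensure that the argument variables~$\bar x_v$ are \emph{permitted} in the interpolant for~$v$. The DAG-interpolant definition restricts $\fv{I(v)}$ to variables occurring both in some incoming and in some outgoing edge label, and a clause constraint need not mention every argument of its head or body symbol (for instance, the clause $x > 0 \wedge p(x,y) \to q(x,y)$ yields an edge label not containing~$y$, so an interpolant for~$q$ could never speak about~$y$). The paper repairs this by padding each edge label with tautological equations $\bar x_v = \bar x_v$ and $\bar x_w = \bar x_w$. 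In addition, the paper first merges parallel clauses $C \wedge p(\bar x_p) \to q(\bar x_q)$ and $D \wedge p(\bar x_p) \to q(\bar x_q)$ into a single clause $(C \vee D) \wedge p(\bar x_p) \to q(\bar x_q)$, since the edge relation of the DAG can carry only one label per pair of nodes. With these two devices added to your construction, your plan coincides with the paper's proof.
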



\subsection{Binary Craig Interpolants \cite{craig1957linear,DBLP:conf/cav/McMillan03}}

The simplest form of Craig interpolation is the derivation of a
\constraint~$I$ such that $A \models I$ and $I \models \lnot B$, and
such that $\fv{I} \subseteq \fv{A} \cap \fv{B}$. Such derivation
is typically constructed by efficiently processing the proof of unsatisfiability
of~$A \wedge B$. To encode a binary interpolation problem
into Horn clauses, we first determine the set~$\bar x = \fv{A}
\cap \fv{B}$ of variables that can possibly occur in the
interpolant. We then pick a \relsym~$p$ of arity~$|\bar x|$, and
define two Horn clauses expressing that $p(\bar x)$ is an interpolant:
\begin{equation*}
  A \to p(\bar x),\qquad B \wedge p(\bar x) \to \mathit{false}
\end{equation*}
It is clear that every syntactic solution for the two Horn clauses
corresponds to an interpolant of $A \wedge B$.


\subsection{Inductive Sequences of
  Interpolants~\cite{Henzinger:2004,DBLP:conf/cav/McMillan06}}
\label{sec:inductive-seq}

Given an unsatisfiable conjunction~$T_1 \wedge \ldots \wedge T_n$ (in practice,
often corresponding to an infeasible path in a program), an \emph{inductive
  sequence of interpolants} is a sequence~$I_0, I_1, \ldots, I_n$ of
formulae such that
\begin{enumerate}
  \item $I_0=\mathit{true}$, $I_n=\mathit{false}$,
  \item for all $i \in \{1, \ldots, n\}$, the entailment~$I_{i-1}\wedge
    T_i \models I_i$ holds, and
  \item for all $i \in \{0, \ldots, n\}$, it is the case that
    $\fv{I_i} \subseteq \fv{T_1, \ldots, T_i} \cap \fv{T_{i+1}, \ldots, T_n}$.
\end{enumerate}
While inductive sequences can be computed by repeated computation of
binary interpolants~\cite{Henzinger:2004}, more efficient solvers have
been developed that derive a whole sequence of interpolants
simultaneously
\cite{DBLP:journals/tocl/CimattiGS10,iprincess2011,tree-interpolants}.

\paragraph{Inductive sequences as linear tree-like Horn clauses.}

An inductive sequence of interpolants can straightforwardly be encoded
as a set of linear Horn clauses, by introducing a fresh \relsym~$p_i$
for each interpolant~$I_i$ to be computed. The arguments of the
\relsyms\ have to be chosen reflecting condition~3 of the definition
of interpolant sequences: for each~$i \in \{0, \ldots, n\}$, we assume
that $\bar x_i = \fv{T_1, \ldots, T_i} \cap \fv{T_{i+1}, \ldots,
  T_n}$ is the vector of variables that can occur in
$I_i$. Conditions~1 and 2 are then represented by the following Horn
clauses:
\begin{equation*}
  p_0(\bar x_0),~~
  p_0(\bar x_0) \wedge T_1 \to p_1(\bar x_1),~~
  p_1(\bar x_1) \wedge T_2  \to p_2(\bar x_2),~~ \ldots,~~
  p_n(\bar x_n) \to \mathit{false}
\end{equation*}

\paragraph{Linear tree-like Horn clauses as inductive sequences.}

Suppose $\ClauseSet$ is a finite, recursion-free, linear, and tree-like
set of Horn clauses.  We can solve the system of Horn clauses by
computing one inductive sequence of interpolants for every connected
component of the $\to_{\ClauseSet}$-graph. First, each clause is
normalised in a manner similar to
\cite{DBLP:conf/pldi/GrebenshchikovLPR12}: for every \relsym\ $p$, we
fix a unique vector of variables~$\bar x_p$, and rewrite $\ClauseSet$ such
that $p$ only occurs in the form~$p(\bar x_p)$; this is possible
since $\ClauseSet$ is recursion-free and body-disjoint. 
We then ensure, through renaming, that every variable~$x$ that is not argument of a
\relsym\ occurs in at most one clause. A connected component then
represents Horn clauses
\begin{equation*}
  C_1 \to p_1(\bar x_1),~~
  C_2 \wedge p_1(\bar x_1) \to p_2(\bar x_2),~~
  C_3 \wedge p_2(\bar x_2) \to p_3(\bar x_3),~~\ldots,~~
  C_n \wedge p_n(\bar x_n) \to \mathit{false}~.
\end{equation*}
(If the first or the last of the clauses is missing, we assume that
its constraint is $\mathit{false}$.)
Any inductive sequence of interpolants for $C_1 \wedge C_2 \wedge
C_3 \wedge \cdots \wedge C_n$ solves the clauses.


\subsection{Tree Interpolants~\cite{tree-interpolants,DBLP:conf/popl/HeizmannHP10}}
\label{sec:tree-interpolants}

Tree interpolants strictly generalise inductive sequences of
interpolants, and are designed with the application of
inter-procedural verification in mind: in this context, the tree
structure of the interpolation problem corresponds to (a part of) the
call graph of a program. Tree interpolation problems correspond to
recursion-free tree-like sets of Horn clauses.

Suppose $(V, E)$ is a finite directed tree, writing $E(v, w)$ to
express that the node~$w$ is a direct child of $v$. Further, suppose
$\phi : V \to \Con$ is a function that labels each node~$v$ of the
tree with a formula~$\phi(v)$. A labelling function~$I : V \to \Con$
is called a \emph{tree interpolant} (for $(V, E)$ and $\phi$) if the
following properties hold:
\begin{enumerate}
\item for the root node~$v_0 \in V$, it is the case that $I(v_0) =
  \mathit{false}$,
\item for any node~$v \in V$, the following entailment holds:
  \begin{equation*}
    \phi(v) \wedge
    \bigwedge_{(v, w) \in E} I(w)
    ~\models~
    I(v)~,
  \end{equation*}
\item for any node~$v \in V$, every non-logical symbol (in our case:
  variable) in $I(v)$ occurs both in some formula~$\phi(w)$ for $w$
  such that $E^*(v, w)$, and in some formula~$\phi(w')$ for some $w'$
  such that $\neg E^*(v, w')$. ($E^*$ is the reflexive transitive
  closure of $E$).
\end{enumerate}

Since the case of tree interpolants is instructive for solving
recursion-free sets of Horn clauses in general, we give a result about
the existence of tree interpolants.  The proof of the lemma computes
tree interpolants by repeated derivation of binary interpolants;
however, as for inductive sequences of interpolants, there are solvers
that can compute all formulae of a tree interpolant
simultaneously~\cite{tree-interpolants,DBLP:conf/popl/GuptaPR11,DBLP:conf/aplas/GuptaPR11}.
\begin{lemma}
  \label{lem:tree-interpolants}
  Suppose the \constraintLang~\Con\ that has the interpolation
  property. Then a tree~$(V, E)$ with labelling function~$\phi : V \to
  \Con$ has a tree interpolant~$I$ if and only if $\bigwedge_{v \in V}
  \phi(v)$ is unsatisfiable.
\end{lemma}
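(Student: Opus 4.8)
The plan is to prove the two implications separately, with only the converse relying on the interpolation property of $\Con$. Throughout I write $\mathrm{subtree}(v)$ for the set of nodes $w$ with $E^*(v,w)$, and I write $\Phi_v = \bigwedge_{u \in \mathrm{subtree}(v)}\phi(u)$ for the conjunction of the labels in the subtree rooted at $v$. For the ``only if'' direction, suppose a tree interpolant $I$ exists; I would show $\bigwedge_{v \in V}\phi(v)$ is unsatisfiable by a bottom-up induction establishing the invariant $\Phi_v \models I(v)$ for every node $v$. At a leaf this is immediate from property~2, whose entailment has an empty conjunction of children. At an internal node $v$ with children $w_1,\dots,w_k$, the subtrees rooted at the $w_j$ are pairwise disjoint, so $\Phi_v = \phi(v) \wedge \bigwedge_j \Phi_{w_j}$; applying the induction hypothesis to each child and then property~2 yields the invariant at $v$. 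Instantiating at the root $v_0$ and using property~1 ($I(v_0)=\mathit{false}$) gives $\bigwedge_{v\in V}\phi(v)\models\mathit{false}$. This direction uses neither the variable condition nor interpolation.

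For the ``if'' direction I would build the interpolant recursively, threading a \emph{context} formula through the construction. The subtle point is that interpolating each subtree independently against the rest does \emph{not} yield property~2: the required entailment $\phi(v)\wedge\bigwedge_{w}I(w)\models I(v)$ needs the children's interpolants to be jointly strong enough, which fails at branching nodes because each child's interpolant is only guaranteed inconsistent with the \emph{full} formulas of its siblings, not with their interpolants. To repair this I process the children $w_1,\dots,w_k$ of a node $v$ in a fixed order, and when recursing into $w_j$ I use as context $G_j = G \wedge \phi(v) \wedge \bigwedge_{i<j} I(w_i) \wedge \bigwedge_{i>j}\Phi_{w_i}$, where $G$ is the context inherited from above and the already-computed $I(w_i)$ replace the earlier siblings. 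The recursion maintains the invariants that $G \wedge \Phi_v$ is unsatisfiable, that $\fv{G}$ involves only variables outside $\mathrm{subtree}(v)$, and that each returned $I(v)$ satisfies $\Phi_v\models I(v)$, unsatisfiability of $I(v)\wedge G$, and $\fv{I(v)}\subseteq\fv{\Phi_v}\cap\fv{G}$.

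Concretely, at node $v$ I would first verify by an inner induction on $j$ that each $G_j\wedge\Phi_{w_j}$ is unsatisfiable, using that $I(w_{j-1})\wedge G_{j-1}$ is unsatisfiable, recurse to obtain $I(w_j)$, and conclude after the last child that $G\wedge\phi(v)\wedge\bigwedge_i I(w_i)$ is unsatisfiable. I would then invoke the interpolation property on $A = \phi(v)\wedge\bigwedge_i I(w_i)$ and $B = G$ to obtain $I(v)$: the interpolant's defining property $A\models I(v)$ is exactly property~2, while the restriction $\fv{I(v)}\subseteq\fv{A}\cap\fv{B}$, together with the bookkeeping $\fv{A}\subseteq\fv{\Phi_v}$ and $\fv{G}\subseteq\fv{V\setminus\mathrm{subtree}(v)}$, yields property~3. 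The top-level call uses $G=\mathit{true}$, so the root's separating formula is unsatisfiable and may be replaced by $\mathit{false}$, giving property~1; the established unsatisfiability of $\phi(v_0)\wedge\bigwedge_i I(w_i)$ is then precisely property~2 at the root.

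The main obstacle I anticipate is exactly securing property~2 at branching nodes, which forces the construction to thread the accumulated context rather than interpolate subtrees in isolation. The remaining work—checking the unsatisfiability of each $G_j\wedge\Phi_{w_j}$ and tracking the free-variable inclusions that deliver property~3—is routine bookkeeping carried along the recursion.
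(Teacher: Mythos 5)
Your proposal is correct and takes essentially the same route as the paper: your ``if'' direction solves exactly the same sequence of binary interpolation problems---separating $\phi(v) \wedge \bigwedge_i I(w_i)$ from the rest of the tree in which already-processed subtrees have been replaced by their interpolants---that the paper constructs by induction over an inverse topological ordering of the nodes. Your recursion with a threaded context $G$ is just a depth-first linearization of that ordering, and your invariants ($G \wedge \Phi_v$ unsatisfiable, $I(v) \wedge G$ unsatisfiable, plus the variable bookkeeping) correspond one-to-one to the paper's inductive properties 1--3.
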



\begin{proof}
  ``$\Rightarrow$'' follows from the observation that every
  interpolant~$I(v)$ is a consequence of the conjunction
  $
    \bigwedge_{(v, w) \in E^+} \phi(w)
  $.
  
  ``$\Leftarrow$'': let $v_1, v_2, \ldots, v_n$ be an inverse
  topological ordering of the nodes in $(V, E)$, i.e., an ordering
  such that $\forall i, j.\; (E(v_i, v_j) \Rightarrow i > j)$. We
  inductively construct a sequence of formulae~$I_1, I_2, \ldots,
  I_n$, such that for every $i \in \{1, \ldots, n\}$ the following
  properties hold:
  \begin{enumerate}
  \item the following conjunction is unsatisfiable:
    \begin{equation}
      \label{eq:conj1}
      \bigwedge\{
      I_k \mid k \leq i,\; \forall j.\; (E(v_j, v_k) \Rightarrow j > i)
      \}
      ~\wedge~
      \Big( \phi(v_{i+1}) \wedge \phi(v_{i+2}) \wedge \cdots
      \wedge \phi(v_n) \Big)
    \end{equation}
  \item the following entailment holds:
    \begin{equation*}
      \phi(v_i) \wedge
      \bigwedge_{(v_i, v_j) \in E} I_j
      ~\models~
      I_i
    \end{equation*}
  \item every non-logical symbol in $I_i$ occurs both in a
    formula~$\phi(w)$ with $E^*(v_i, w)$, and in a
    formula~$\phi(w')$ with $\neg E^*(v_i, w')$.
  \end{enumerate}

  Assume that the formulae~$I_1, I_2, \ldots, I_i$ have been
  constructed, for \m{i \in \{0, \ldots, n-1\}}. We then derive
  the next interpolant~$I_{i+1}$ by solving the binary interpolation
  problem
  \begin{multline}
    \label{eq:conj2}
    \Big(
    \phi(v_{i+1}) \wedge
    \bigwedge_{E(v_{i+1}, v_j)} I_j
    \Big)
    ~~\wedge~~\\
    \Big(
    \bigwedge\{
    I_k \mid k \leq i,\; \forall j.\; (E(v_j, v_k) \Rightarrow j > i + 1)
    \}
    ~\wedge~
    \phi(v_{i+2}) \wedge \cdots
    \wedge \phi(v_n) \Big)
  \end{multline}
  That is, we construct $I_{i+1}$ so that the following entailments
  hold:
  \begin{align*}
    &
    \phi(v_{i+1}) \wedge
    \bigwedge_{E(v_{i+1}, v_j)} I_j
    ~\models~
    I_{i+1},
    \\
    &
    \bigwedge\{
    I_k \mid k \leq i,\; \forall j.\; (E(v_j, v_k) \Rightarrow j > i + 1)
    \}
    ~\wedge~
    \phi(v_{i+2}) \wedge \cdots
    \wedge \phi(v_n)
    ~\models~
    \neg I_{i+1}
  \end{align*}
  Furthermore, $I_{i+1}$ only contains non-logical symbols that are
  common to the left and the right side of the conjunction.

  Note that \eqref{eq:conj2} is equivalent to \eqref{eq:conj1},
  therefore unsatisfiable, and a well-formed interpolation problem.
  It is also easy to see that the properties 1--3 hold for $I_{i+1}$.
  Also, we can easily verify that the labelling function $I : v_i
  \mapsto I_i$ is a solution for the tree interpolation problem
  defined by $(V, E)$ and $\phi$.  \qed
\end{proof}



\paragraph{Tree interpolation as tree-like Horn clauses.}

The encoding of a tree interpolation problem as a tree-like set of
Horn clauses is very similar to the encoding for inductive sequences
of interpolants. We introduce a fresh \relsym~$p_v$ for each node~$v
\in V$ of a tree interpolation problem~$(V, E), \phi$, assuming that
for each $v \in V$ the vector~$\bar x_v = \bigcup_{E^*(v, w)}
\fv{\phi(w)} \cap \bigcup_{\neg E^*(v, w)} \fv{\phi(w)}$ represents
the set of variables that can occur in the interpolant~$I(v)$. The
interpolation problem is then represented by the following clauses:
\begin{align*}
  p_0(\bar x_0) \to \mathit{false},\quad
  \Big\{~ \phi(v) \wedge
    \bigwedge_{(v, w) \in E} p_w(\bar x_w)
    \to
    p_v(\bar x_v) ~\Big\}_{v \in V}
\end{align*}

\paragraph{Tree-like Horn clauses as tree interpolation.}

Suppose $\ClauseSet$ is a finite, recursion-free, and tree-like set of
Horn clauses.  We can solve the system of Horn clauses by computing a
tree interpolant for every connected component of the
$\to_{\ClauseSet}$-graph. As before, we first normalise the Horn
clauses by fixing, for every \relsym~$p$, a unique vector of
variables~$\bar x_p$, and rewriting $\ClauseSet$ such that $p$ only
occurs in the form~$p(\bar x_p)$. We also ensure that every
variable~$x$ that is not argument of a \relsym\ occurs in at most one
clause. The tree interpolation graph~$(V, E)$ is then defined by
choosing the set~$V = {\cal R}\cup \{\mathit{false}\}$ of \relsyms\ as
nodes, and the child relation~$E(p, q)$ to hold whenever $p$ occurs as
head, and $q$ within the body of a clause. The labelling
function~$\phi$ is defined by~$\phi(p) = C$ whenever there is a clause
with head symbol~$p$ and \constraint~$C$, and $\phi(p) =
\mathit{false}$ if $p$ does not occur as head of any clause.

\begin{example}
  \label{ex:treeInt}
  We consider a subset of the Horn clauses given in
  Fig.~\ref{fig::exampleClauses}:
\begin{lstlisting}[emph={true,false},emphstyle={\bfseries}]
  (1)   r1(X, Res)   $\expl~$ true
  (2)   r2(X', Res)  $\expl~$  r1(X, Res) $\pand$ X' $\ge$ 0
  (3)   r3(X, Res')  $\expl~$  r2(X, Res) $\pand$ rf(X, Res')
  (5)   false        $\expl~$  r3(X, Res) $\pand$ Res $\not=$ X + 1
  (6)   r5(N, Rec, Tmp)   $\expl~$ true
  (9)   r8(N, Rec, Tmp)   $\expl~$ r5(N, Rec, Tmp) $\pand$ N $\le$ 0
  (11)  r9(N, Rec', Tmp)  $\expl~$ r8(N, Rec, Tmp) $\pand$ Rec' = 1
  (12)  rf(N, Rec)        $\expl~$ r9(N, Rec, Tmp)
\end{lstlisting}

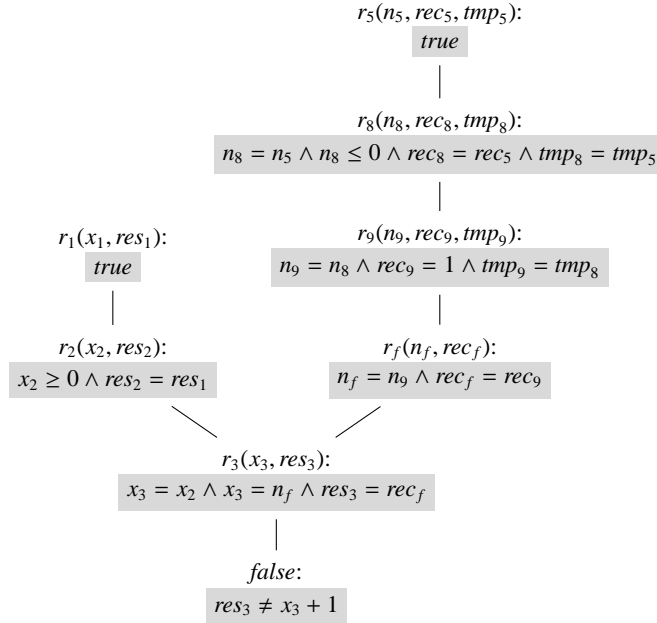
\begin{figure}[tb]
\begin{center}
\begin{tikzpicture}[grow=north]
  \node {\begin{tabular}{c}$\mathit{false}$:\\
           \colorbox[gray]{0.85}{$\mathit{res}_3 \not= x_3 + 1$}\end{tabular}}
      child {node {
          \begin{tabular}{c}
            $r_3(x_3, \mathit{res}_3)$:\\
            \colorbox[gray]{0.85}{$x_3 = x_2 \wedge x_3 = n_f \wedge
                 \mathit{res}_3 = \mathit{rec}_f$}
          \end{tabular}}
        child {node[xshift=10ex] {
          \begin{tabular}{c}$r_f(n_f, \mathit{rec}_f)$:\\
             \colorbox[gray]{0.85}{$n_f = n_9 \wedge
                      \mathit{rec}_f = \mathit{rec}_9$}
          \end{tabular}}
          child {node {
            \begin{tabular}{c}$r_9(n_9, \mathit{rec}_9, \mathit{tmp}_9)$:\\
             \colorbox[gray]{0.85}{$n_9 = n_8 \wedge \mathit{rec}_9 = 1 \wedge
                      \mathit{tmp}_9 =  \mathit{tmp}_8$}
              \end{tabular}}
            child {node {
              \begin{tabular}{c}$r_8(n_8, \mathit{rec}_8, \mathit{tmp}_8)$:\\
                \colorbox[gray]{0.85}{$n_8 = n_5 \wedge n_8 \leq 0 \wedge
                 \mathit{rec}_8 =\mathit{rec}_5 \wedge
                 \mathit{tmp}_8 = \mathit{tmp}_5$}\end{tabular}}
              child {node {
                \begin{tabular}{c}$r_5(n_5, \mathit{rec}_5, \mathit{tmp}_5)$:\\
                 \colorbox[gray]{0.85}{$\mathit{true}$}\end{tabular}}
              }
            }
          }
        }
        child {node[xshift=-10ex] {
            \begin{tabular}{c}$r_2(x_2, \mathit{res}_2)$:\\
             \colorbox[gray]{0.85}{$x_2 \geq 0 \wedge
                   \mathit{res}_2 = \mathit{res}_1$}\end{tabular}}
          child {node {
             \begin{tabular}{c}$r_1(x_1, \mathit{res}_1)$:\\
               \colorbox[gray]{0.85}{$\mathit{true}$}\end{tabular}}
          }
        }
      };
\end{tikzpicture}
\end{center}
  
  \caption{Tree interpolation problem for the clauses in Example~\ref{ex:treeInt}}
  \label{fig:treeInt}
\end{figure}
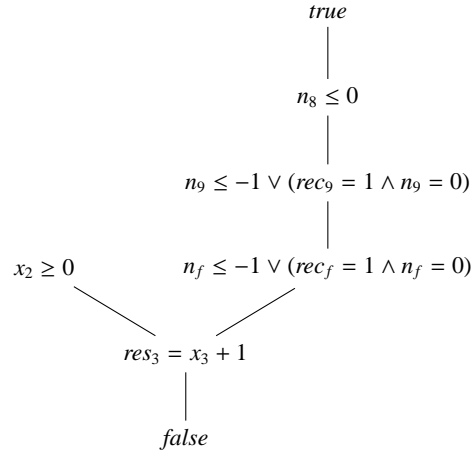
\begin{figure}
\begin{center}
\begin{tikzpicture}[grow=north,level distance=8ex]
  \node {$\mathit{false}$}
     child {node {$\mathit{res}_3 = x_3 + 1$}
       child {node[xshift=8ex] {$n_f \leq -1 \vee (\mathit{rec}_f = 1 \wedge n_f = 0)$}
         child {node {$n_9 \leq -1 \vee (\mathit{rec}_9 = 1 \wedge n_9 = 0)$}
           child {node {$n_8 \leq 0$}
             child {node {$\mathit{true}$}}}
       }}
       child {node[xshift=-8ex] {$x_2 \geq 0$}}
};
\end{tikzpicture}
\end{center}
\caption{Tree interpolant solving the interpolation problem in
  Fig.~\ref{fig:treeInt}}
  \label{fig:treeIntSolution}
\end{figure}

Note that this recursion-free subset of the clauses is body-disjoint
and head-disjoint, and thus tree-like. Since the complete set of
clauses in Fig.~\ref{fig::exampleClauses} is solvable, also any subset
is; in order to compute a (syntactic) solution of the clauses, we set
up the corresponding tree interpolation
problem. Fig.~\ref{fig:treeInt} shows the tree with the
labelling~$\phi$ to be interpolated (in grey), as well as the head
literals of the clauses generating the nodes of the tree.  A tree
interpolant solving the interpolation problem is given in
Fig.~\ref{fig:treeIntSolution}. The tree interpolant can
straightforwardly be mapped to a solution of the original tree-like
Horn, for instance we set $r_8(n_8, \mathit{rec}_8, \mathit{tmp}_8) =
(n_8 \leq 0)$ and $r_9(n_9, \mathit{rec}_9, \mathit{tmp}_9) = (n_9
\leq -1 \vee (\mathit{rec}_9 = 1 \wedge n_9 = 0))$.
\end{example}

\subsubsection{Symmetric Interpolants}

A special case of tree interpolants, \emph{symmetric interpolants,}
was introduced in \cite{DBLP:conf/csl/McMillan04}. Symmetric
interpolants are equivalent to tree interpolants with a flat tree
structure~$(V, E)$, i.e., $V = \{\mathit{root}, v_1, \ldots, v_n\}$,
where the nodes~$v_1, \ldots, v_n$ are the direct children of
$\mathit{root}$.


\subsection{Restricted (and Unrestricted) DAG Interpolants~\cite{DBLP:conf/sas/AlbarghouthiGC12}}

Restricted DAG interpolants are a further generalisation of inductive
sequence of interpolants, introduced for the purpose of reasoning
about multiple paths in a program
simultaneously~\cite{DBLP:conf/sas/AlbarghouthiGC12}.  Suppose $(V, E,
\mathit{en}, \mathit{ex})$ is a finite connected DAG with entry
node~$\mathit{en} \in V$ and exit node~$\mathit{ex} \in V$, further
${\cal L}_E : E \to \Con$ a labelling of edges with constraints, and
${\cal L}_V : V \to \Con$ a labelling of vertices. A \emph{restricted
  DAG interpolant} is a mapping $I : V \to \Con$ with
\begin{enumerate}
\item $I(\mathit{en}) = \mathit{true}$, $I(\mathit{ex}) = \mathit{false}$,
\item for all $(v, w) \in E$ the entailment $I(v) \wedge {\cal L}_V(v)
  \wedge {\cal L}_E(v,w) \models I(w) \wedge {\cal L}_V(w)$ holds, and
\item for all $v \in V$ it is the case that\footnote{The
    definition of DAG interpolants in
    \cite[Def.~4]{DBLP:conf/sas/AlbarghouthiGC12} implies that
    $\fv{I(v)} = \emptyset$ for every interpolant~$I(v), v \in V$,
    i.e., only trivial interpolants are allowed. We assume that this
    is a mistake in \cite[Def.~4]{DBLP:conf/sas/AlbarghouthiGC12}, and
    corrected the definition as shown here.}
  \begin{equation*}
    \fv{I(v)} ~\subseteq~
    \Big(\bigcup_{(a, v) \in E} \fv{{\cal L}_E(a, v)}\Big) \cap \Big(\bigcup_{(v, a)
      \in E} \fv{{\cal L}_E(v, a)}\Big)~.
  \end{equation*}
\end{enumerate}

The UFO verification system~\cite{DBLP:conf/cav/AlbarghouthiLGC12} is
able to compute DAG interpolants, based on the interpolation
functionality of MathSAT~\cite{DBLP:journals/tocl/CimattiGS10}. 
We can observe that DAG interpolants (despite their name) are incomparable in
expressiveness to tree interpolation. 
This is because DAG interpolants correspond to \emph{linear} Horn clauses, and might have shared relation symbol in bodies, while tree interpolants correspond to \emph{possibly nonlinear tree-like} Horn clauses, but do not allow shared relation symbols in bodies.
Nevertheless, it is possible to reduce DAG interpolants to tree interpolants, 
but only at the cost of a potentially exponential growth in the number of clauses. 


\paragraph{Encoding of restricted DAG interpolants as linear Horn
  clauses.}

For every $v \in V$, let
  \begin{equation*}
    \{\bar x_v\} ~~=~~
    \Big(\bigcup_{(a, v) \in E} \fv{{\cal L}_E(a, v)}\Big) \cap
    \Big(\bigcup_{(v, a) \in E} \fv{{\cal L}_E(v, a)}\Big)
  \end{equation*}
  be the variables allowed in the interpolant to be computed for $v$,
  and $p_v$ be a fresh \relsym\ of arity~$|\bar x_v|$. The
  interpolation problem is then defined by the following set of linear
  Horn clauses:
  \begin{align*}
    \text{For each $(v, w) \in E$:}\qquad
    & {\cal L}_V(v) \wedge {\cal L}_E(v, w) \wedge p_v(\bar x_v)
    \to p_w(\bar x_w),
    \\
    & {\cal L}_V(v) \wedge \neg {\cal L}_V(w) \wedge
    {\cal L}_E(v, w) \wedge p_v(\bar x_v)
    \to \mathit{false},
    \\
    \text{For $\mathit{en}, \mathit{ex} \in V$:}\qquad &
    \mathit{true} \to p_{\mathit{en}}(\bar x_{\mathit{en}}),\qquad
    p_{\mathit{ex}}(\bar x_{\mathit{ex}}) \to \mathit{false}
  \end{align*}

\paragraph{Encoding of linear Horn clauses as  DAG interpolants.}
Suppose $\ClauseSet$ is a finite, recursion-free, and linear set of Horn
clauses.  We can solve the system of Horn clauses by computing a DAG
interpolant for every connected component of the $\to_{\ClauseSet}$-graph. 
As in Sect.~\ref{sec:inductive-seq}, we normalise Horn
clauses by fixing a unique vector~$\bar x_p$ of argument variables for
each \relsym~$p$, and ensure that every non-argument variable~$x$
occurs in at most one clause. We also assume that multiple clauses~$C
\wedge p(\bar x_p) \to q(\bar x_q)$ and $D \wedge p(\bar x_p) \to
q(\bar x_q)$ with the same \relsyms\ are merged to $(C \vee D) \wedge
p(\bar x_p) \to q(\bar x_q)$.

Let $\{p_1, \ldots, p_n\}$ be all \relsyms\ of one connected
component. We then define the DAG interpolation problem $(V, E,
\mathit{en}, \mathit{ex}), {\cal L}_E, {\cal L}_V$ by
\begin{itemize}
\item the vertices~$V = \{p_1, \ldots, p_n\} \cup \{\mathit{en},
  \mathit{ex}\}$, including two fresh nodes~$\mathit{en}, \mathit{ex}$,
\item the edge relation
  \begin{align*}
    E ~=~~~~ & \{(p, q) \mid \text{there is a
      clause~} C \wedge p(\bar x_p) \to q(\bar x_q) \in {\ClauseSet} \}
    \\ \cup~ &
     \{(\mathit{en}, p) \mid \text{there is a
       clause~} D \to p(\bar x_p) \in {\ClauseSet} \}
    \\ \cup~ &
     \{(p, \mathit{ex}) \mid \text{there is a
       clause~} E \wedge p(\bar x_p) \to \mathit{false} \in {\ClauseSet} \}~,
  \end{align*}
\item for each $(v, w) \in E$, the edge labelling
  \begin{equation*}
    {\cal L}_E(v, w) ~=~
    \begin{cases}
      C \wedge \bar x_v = \bar x_v \wedge \bar x_w = \bar x_w
      & \text{if~} C \wedge v(\bar x_v) \to w(\bar x_w) \in {\ClauseSet}
      \\
      D \wedge \bar x_w = \bar x_w & \text{if~} v = \mathit{en} \text{~and~}
      D \to w(\bar x_w) \in {\ClauseSet}
      \\
      E \wedge \bar x_v = \bar x_v & \text{if~} w = \mathit{ex} \text{~and~}
      E \wedge v(\bar x_v) \to \mathit{false} \in {\ClauseSet}
    \end{cases}
  \end{equation*}
  Note that the labels include equations like~$\bar x_v = \bar x_v$ to
  ensure that the right variables are allowed to occur in interpolants.
\item for each $v \in V$, the node labelling ${\cal L}_V(v) = \mathit{true}$.
\end{itemize}
By checking the definition of DAG interpolants, it can be verified
that every interpolant solving the problem~$(V, E, \mathit{en},
\mathit{ex}), {\cal L}_E, {\cal L}_V$ is also a solution of the linear
Horn clauses.



\subsection{Disjunctive Interpolants \cite{disjInterpolantsTR}}
\label{sec:disjInterpolants}

Disjunctive interpolants were introduced in \cite{disjInterpolantsTR}
as a generalisation of tree interpolants. Disjunctive interpolants
resemble tree interpolants in the sense that the relationship of the
components of an interpolant is defined by a tree; in contrast to tree
interpolants, however, this tree is an and/or-tree: branching in the
tree can represent either \emph{conjunctions} or \emph{disjunctions.}
Disjunctive interpolants correspond to sets of body-disjoint Horn
clauses; in this representation, and-branching is encoded by clauses
with multiple body literals (like with tree interpolants), while
or-branching is interpreted as multiple clauses sharing the same head
symbol. For a detailed account on disjunctive interpolants, we refer
the reader to \cite{disjInterpolantsTR}.

The solution of body-disjoint Horn clauses can be computed by solving
a sequence of tree-like sets of Horn clauses:
\begin{lemma}
  Let $\ClauseSet$ be a finite set of recursion-free body-disjoint
  Horn clauses. $\ClauseSet$ has a syntactic/semantic solution if and
  only if every maximum tree-like subset of $\ClauseSet$ has a
  syntactic/semantic solution.
\end{lemma}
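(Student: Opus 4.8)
The plan is to prove the two directions separately. The ``only if'' direction is routine: any syntactic (resp.\ semantic) solution of $\ClauseSet$ restricts, without change, to a solution of every subset of $\ClauseSet$, since dropping clauses only removes constraints. Hence every maximal tree-like subset $\ClauseSet_i$ is solved by the very same \relsym\ assignment. This direction uses no structural hypothesis.

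For the ``if'' direction I would work with the canonical \emph{least solution}. Since $\ClauseSet$ is recursion-free, the \relsyms\ can be processed in topological order of $\to_{\ClauseSet}$: for each \relsym~$p$ with defining clauses $\{C_j \wedge q_{j,1} \wedge \cdots \wedge q_{j,m_j} \to p\}_j$ (finitely many, after normalising so that $p$ always occurs as $p(\bar x_p)$), I set $\mathit{sol}(p) = \bigvee_j \exists \bar y_j.\,(C_j \wedge \bigwedge_k \mathit{sol}(q_{j,k}))$, i.e.\ semantically the least relation interpreting $p$. By construction $\mathit{sol}$ satisfies every clause whose head is a \relsym. The decisive reduction is the standard least-model fact: $\ClauseSet$ is solvable if and only if this least solution additionally satisfies every clause with head~$\mathit{false}$, because the least solution makes all bodies as strong as possible, so whenever some solution empties a $\mathit{false}$-headed body the least one does too.

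The heart of the argument is the decomposition identity $\mathit{sol}_{\ClauseSet}(p) \equiv \bigvee_i \mathit{sol}_{\ClauseSet_i}(p)$, the disjunction ranging over all maximal tree-like subsets. The inclusion $\supseteq$ is immediate from the ``only if'' direction applied to least solutions. For $\subseteq$ I would argue by bottom-up induction: a tuple in $\mathit{sol}_{\ClauseSet}(p)$ arises from one defining clause $C_j \wedge q_{j,1} \wedge \cdots \to p$, witnessed by tuples lying, by the induction hypothesis, in $\mathit{sol}_{\ClauseSet_{i_k}}(q_{j,k})$ for some subsets $i_k$. Here \emph{body-disjointness} is essential: each body \relsym~$q_{j,k}$ occurs in only this one body, so the subtrees of $\to_{\ClauseSet}$ rooted at the distinct $q_{j,k}$ are vertex-disjoint, and the subtree-local choices recorded by the $i_k$ can be merged into a single maximal tree-like subset $i^{*}$ that also selects the clause $C_j \wedge \cdots \to p$. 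The tuple then lies in $\mathit{sol}_{\ClauseSet_{i^{*}}}(p)$, giving $\subseteq$.

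Granting the decomposition, each $\mathit{false}$-headed clause $h$ extends (by choosing one definition per descendant \relsym) to a maximal tree-like subset $\ClauseSet_{i^{*}}$ containing~$h$; since $\ClauseSet_{i^{*}}$ is solvable, its least solution empties the body of $h$, while by the decomposition any witness for the body of $h$ under $\mathit{sol}_{\ClauseSet}$ would project to a witness under some $\mathit{sol}_{\ClauseSet_{i^{*}}}$, a contradiction. Thus the least solution solves $\ClauseSet$. The syntactic case proceeds identically, noting that a tree-like subset with any syntactic solution also admits a syntactic \emph{least} solution (its strongest solution, built exactly as above), which still empties the relevant bodies; the $\bigvee$/$\exists$ combination is expressible in any \constraintLang\ closed under these connectives, and otherwise follows from the interpolation property precisely as in Lemma~\ref{lem:tree-interpolants}. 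I expect the main obstacle to be the $\subseteq$ half of the decomposition: making rigorous that the independent, subtree-local choices $i_k$ can always be merged into one globally consistent maximal tree-like subset, which is exactly the step where body-disjointness, rather than mere recursion-freeness, is indispensable.
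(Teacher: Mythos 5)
Your ``only if'' direction and the \emph{semantic} half of your ``if'' direction are correct, and the latter is a genuinely different argument from the paper's. The paper does not pass through least solutions at all: it takes the solution constraints \emph{already obtained} for the maximal tree-like subsets and merges them, for each \relsym~$p$, into a positive Boolean ($\wedge$/$\vee$) combination coordinated by the or-branching structure (one workable form is $\mathit{sol}(p)=\bigvee_{\rho}\bigwedge_{\sigma \supseteq \rho}\mathit{sol}_{\sigma}(p)$, where $\rho$ ranges over the clause choices in the subtree below $p$ and $\sigma$ over the maximal tree-like subsets extending $\rho$; the paper's worked example shows why $r_9$ needs the disjunction of its two solution constraints while $r_2$ needs the conjunction). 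Your alternative --- the identity $\mathit{sol}_{\ClauseSet}(p) \equiv \bigvee_i \mathit{sol}_{\ClauseSet_i}(p)$ for \emph{least} solutions, plus the merging of subtree-local choices into one global choice function --- is sound for semantic solvability, and you correctly isolate where body-disjointness (together with recursion-freeness) enters: the subtrees below distinct body atoms are disjoint, so the choices $i_k$ are compatible and extend to a single maximal tree-like subset.

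The genuine gap is in the syntactic case. Your global solution consists of constraints of the form $\bigvee_j \exists \bar y_j\,(C_j \wedge \cdots)$, so it is a \emph{syntactic} solution only if the \constraintLang\ is closed under disjunction and, crucially, under existential quantification; the lemma assumes neither, and for languages relevant to this paper (e.g.\ quantifier-free linear arithmetic combined with uninterpreted functions) projection is not expressible. The repair you propose --- ``a tree-like subset with any syntactic solution also admits a syntactic least solution'' --- is false in general: a clause set can have a quantifier-free syntactic solution while its least (strongest) solution is not definable in the language. The appeal to the interpolation property does not close the hole either: that property is not among the lemma's hypotheses, and interpolation yields \emph{some} solution of a tree-like set, never canonically the least one. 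The paper's construction avoids this entirely because it applies only $\wedge$ and $\vee$ to constraints it already has in hand; the coordination difficulty that you resolve by canonicity (least solutions) is resolved there by indexing the conjunctions and disjunctions by the or-choices.
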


\begin{proof}
  We outline direction ``$\Leftarrow$'' for syntactic solutions.
  Solving the tree-like subsets of $\ClauseSet$ yields, for each
  \relsym~$p \in \cal R$, a set~$\mathit{SC}_p$ of solution
  constraints. A global solution of $\ClauseSet$ can be constructed by
  forming a positive Boolean combination of the \constraints\ in
  $\mathit{SC}_p$ for each $p \in \cal R$.
 \qed
\end{proof}

\begin{example}
  We consider a recursion-free unwinding of the Horn clauses in
  Fig.~\ref{fig::exampleClauses}.  To make the set of clauses
  body-disjoint, the clause (6), (9), (11), (12) were duplicated,
  introducing primed copies of all \relsyms\ involved. The clauses are
  not head-disjoint, since (10) and (11) share the same head symbol:
\begin{lstlisting}[emph={true,false},emphstyle={\bfseries}]
  (1)   r1(X, Res)   $\expl~$ true
  (2)   r2(X', Res)  $\expl~$  r1(X, Res) $\pand$ X' $\ge$ 0
  (3)   r3(X, Res')  $\expl~$  r2(X, Res) $\pand$ rf(X, Res')
  (5)   false        $\expl~$  r3(X, Res) $\pand$ Res $\not=$ X + 1

  (6)   r5(N, Rec, Tmp)   $\expl~$ true
  (7)   r6(N, Rec, Tmp)   $\expl~$ r5(N, Rec, Tmp) $\pand$ N > 0
  (8)   r7(N, Rec, Tmp')  $\expl~$ r6(N, Rec, Tmp) $\pand$ rf'(N - 1, Tmp')
  (9)   r8(N, Rec, Tmp)   $\expl~$ r5(N, Rec, Tmp) $\pand$ N $\le$ 0
  (10)  r9(N, Rec', Tmp)  $\expl~$ r7(N, Rec, Tmp) $\pand$ Rec' = Tmp + 1
  (11)  r9(N, Rec', Tmp)  $\expl~$ r8(N, Rec, Tmp) $\pand$ Rec' = 1
  (12)  rf(N, Rec)        $\expl~$ r9(N, Rec, Tmp)

  (6')   r5'(N, Rec, Tmp)   $\expl~$ true
  (9')   r8'(N, Rec, Tmp)   $\expl~$ r5'(N, Rec, Tmp) $\pand$ N $\le$ 0
  (11')  r9'(N, Rec', Tmp)  $\expl~$ r8'(N, Rec, Tmp) $\pand$ Rec' = 1
  (12')  rf'(N, Rec)        $\expl~$ r9'(N, Rec, Tmp)
\end{lstlisting}
There are two maximum tree-like subsets: $T_1 = \{(1), (2), (3), (5),
(6), (9), (11), (12)\}$, and $T_2 = \{(1), (2), (3), (5), (6), (7),
(8), (10), (12), (6'), (9'), (11'), (12')\}$. The subset~$T_1$ has
been discussed in Example~\ref{ex:treeInt}. In the same way, it is
possible to construct a solution for $T_2$ by solving a tree
interpolation problem. The two solutions can be combined to construct
a solution of $T_1 \cup T_2$:

\medskip
\noindent
 {\small
\begin{equation*}
  \begin{array}{l@{\qquad}*{3}{@{\quad}l}}
    \hline
    & T_1 & T_2 & T_1 \cup T_2
    \\\hline
    r_1(x, r) &
    \mathit{true} & \mathit{true} & \mathit{true}
    \\
    r_2(x, r) &
    x \geq 0 & \mathit{true} & x \geq 0
    \\
    r_3(x, r) &
    r = x + 1 & r = x + 1 & r = x + 1
    \\
    r_5(n, c, t) &
    \mathit{true} & \mathit{true} & \mathit{true}
    \\\hline
    r_6(n, c, t) &
    - & n \geq 1 & n \geq 1
    \\
    r_7(n, c, t) &
    - & t = n & t = n
    \\
    r_8(n, c, t) &
    n \leq 0 & - & n \leq 0
    \\
    r_9(n, c, t) &
    n \leq -1 \vee (c = 1 \wedge n = 0) & c = n + 1 &
    n \leq -1 \vee c = n + 1
    \\
    r_f(n, c) &
    n \leq -1 \vee (c = 1 \wedge n = 0) & c = n + 1 &
    n \leq -1 \vee c = n + 1
    \\\hline
    r_5'(n, c, t) &
    - & \mathit{true} & \mathit{true}
    \\
    r_8'(n, c, t) &
    - & n \leq 0 & n \leq 0
    \\
    r_9'(n, c, t) &
    - & n \leq -1 \vee (c = 1 \wedge n = 0)
    & n \leq -1 \vee (c = 1 \wedge n = 0)
    \\
    r_f'(n, c, t) &
    - & n \leq -1 \vee (c = 1 \wedge n = 0)
    & n \leq -1 \vee (c = 1 \wedge n = 0)
    \\\hline
  \end{array}
\end{equation*}}

In particular, the disjunction of the two interpretations of $r_9(n,
c, t)$ has to be used, in order to satisfy both (10) and (11)
(similarly for $r_f(n, c)$). In contrast, the conjunction of the
interpretations of $r_2(n, c, t)$ is needed to satisfy (3).
\end{example}



\section{The Complexity of Recursion-free Horn Clauses}
\label{sec:complexity}

\begin{figure}[tb]
  \begin{center}
  \begin{tikzpicture}
    \node[hornNode] (linearTree) {Linear tree-like};
    \node[hornNode] at ($(linearTree)+(0.8,2)$) (bodyD) {Body-disjoint};
    \node[hornNode,above=2.7 of bodyD] (general) {General recursion-free};
    \node[hornNode] at ($(linearTree)+(2.2,1)$) (tree) {Tree-like};
    \node[hornNode,above=2.7 of tree] (headD) {Head-disjoint};
    \node[hornNode] at ($(linearTree)+(-0.8,3)$) (linear) {Linear};

    \draw[hornEdge] (linearTree) -- (tree);
    \draw[hornEdge] (linearTree) -- (linear);
    \draw[hornEdge] (tree) -- (bodyD);
    \draw[hornEdge] (tree) -- (headD);
    \draw[hornEdge] (bodyD) -- (general);
    \draw[hornEdge] (headD) -- (general);
    \draw[hornEdge] (linear) -- (general);

    \node[hornNode,left=1 of linearTree] (intSeq) {Inductive interpolant sequences};
    \node[hornNode] at ($(intSeq)+(0,-1)$) (binInt) {Binary interpolation};
    \node[hornNode] at (intSeq |- tree) (treeInt) {Tree interpolation};
    \node[hornNode] at (intSeq |- bodyD) (disjInt) {Disjunctive interpolation};
    \node[hornNode,xshift=-7ex] at (intSeq |- linear) (dagInt) {(Restricted) DAG interpolation};

    \draw[hornEdge] (binInt) -- (intSeq);
    \draw[hornEdge] (intSeq) -- (treeInt);
    \draw[hornEdge] (treeInt) -- (disjInt);
    \draw[hornEdge] (intSeq) -| ($(dagInt.south)+(-1.5,0)$);

    \begin{pgfonlayer}{background}
      \draw[intHornEq] (intSeq) -- (linearTree);
      \draw[intHornEq] (treeInt) -- (tree);
      \draw[intHornEq] (disjInt) -- (bodyD);
      \draw[intHornEq] (dagInt) -- (linear);

      \draw[line width=2pt,dashed,draw=black!30] ($(linear)+(-6.5,0.5)$) -- +(11.5,0);
      \draw[line width=2pt,dashed,draw=black!30] ($(linearTree)+(-1.7,-2.1)$) -- +(0,7.9);
      \node[rotate=90,anchor=south east] at ($(linear)+(4.9,-1.2)$) {co-NP};
      \node[rotate=90,anchor=south west] at ($(linear)+(4.9,0.7)$) {co-NEXPTIME};
      \node[anchor=north west] at ($(linearTree)+(-1.2,-1.6)$) (recFreeLabel) {Recursion-free Horn clauses};
      \node[anchor=north] at (binInt |- recFreeLabel.north) {Craig interpolation};
    \end{pgfonlayer}
 \end{tikzpicture}
  \end{center}

  \caption{Relationship between different forms of Craig
    interpolation, and different fragments of recursion-free Horn
    clauses. An arrow from A to B expresses that problem A is
    (strictly) subsumed by B. The complexity classes ``co-NP'' and
    ``co-NEXPTIME'' refer to the problem of checking solvability of
    Horn clauses over quantifier-free Presburger arithmetic.}
  \label{fig:complexity}
\end{figure}
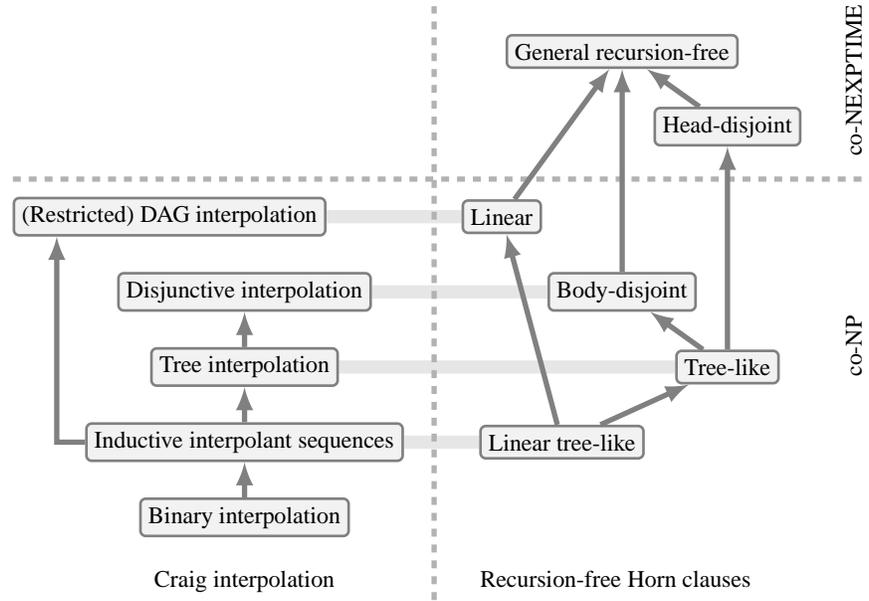

We give an overview of the considered fragments of recursion-free Horn
clauses, and the corresponding interpolation problem, in
Fig.~\ref{fig:complexity}. The diagram also shows the complexity of
deciding (semantic or syntactic) solvability of a set of Horn clauses,
for Horn clauses over the \constraintLang\ of quantifier-free
Presburger arithmetic. Most of the complexity results occur in
\cite{disjInterpolantsTR}, but in addition we use the following two
observations:
\begin{lemma}
  Semantic solvability of recursion-free linear Horn clauses over the
  \constraintLang\ of quantifier-free Presburger arithmetic is in
  co-NP.
\end{lemma}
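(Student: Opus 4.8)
The plan is to establish the bound by showing that the complementary problem---semantic \emph{non}-solvability---lies in NP, by exhibiting a polynomial-size certificate that can be checked in polynomial time. Since for quantifier-free Presburger arithmetic the class $\cal S$ consists of the single standard model $\zed$, semantic solvability of $\ClauseSet$ amounts to the existence of relations over $\zed$ interpreting $\cal R$ that validate every clause. The certificate I would use for non-solvability is a \emph{refutation chain}: a sequence of clauses $h_1, h_2, \ldots, h_m$ (each renamed with fresh variables) in which $h_1$ has no \relsym\ in its body, $h_m$ has head $\mathit{false}$, and the head \relsym\ of each $h_i$ coincides with the single body \relsym\ of $h_{i+1}$. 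Associated with the chain is the \constraint\ obtained by conjoining the \constraints\ of all the $h_i$ together with the equalities identifying, at each link, the argument terms of the shared head and body literals.

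The heart of the argument is the equivalence: $\ClauseSet$ is semantically unsolvable if and only if there is such a refutation chain whose associated \constraint\ is satisfiable over $\zed$. For the ``if'' direction, given a satisfying assignment $\sigma$ of the chain \constraint, I would argue that no interpretation of $\cal R$ can validate all clauses: the fact $h_1$ forces the tuple named by its head into the interpretation of its \relsym, each rule $h_i$ propagates membership to the next \relsym\ under $\sigma$, and the final goal $h_m$ then forces $\mathit{false}$, a contradiction. For the ``only if'' direction I would construct an explicit solution under the assumption that \emph{every} refutation chain has an unsatisfiable \constraint: interpret each \relsym~$p$ by the set of argument tuples reachable by some chain from a fact down to $p$, i.e.\ the finite disjunction, over all such partial chains, of the corresponding projected \constraints. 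Because $\ClauseSet$ is recursion-free this disjunction is finite, and because Presburger arithmetic admits quantifier elimination the projections remain expressible, so the interpretation is well defined. Facts and rules are satisfied by construction, and a goal clause could be violated only if some chain reaching $\mathit{false}$ had a satisfiable \constraint, which is excluded by assumption.

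Two quantitative points then yield the NP bound. First, recursion-freeness forces every \relsym\ to occur at most once along a chain (a repetition would be a cycle in $\to_{\ClauseSet}$), so chains have length at most $|{\cal R}|+1$ and the associated \constraint\ has size polynomial in the input. Second, satisfiability of quantifier-free Presburger (linear integer) \constraints\ is in NP by the small-model property, so a satisfying assignment of polynomial bit-length exists whenever the chain \constraint\ is satisfiable. The full certificate is therefore a refutation chain together with such an assignment, both of polynomial size and checkable in polynomial time; hence non-solvability is in NP and solvability in co-NP.

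The step I expect to be the main obstacle is the ``only if'' direction of the equivalence---verifying that the reachability interpretation genuinely solves all clauses, in particular that projecting out the auxiliary variables at each link stays within quantifier-free Presburger, and that the finitely many chains suffice to capture all derivations in the linear, recursion-free setting.
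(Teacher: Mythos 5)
Your proposal is correct and follows essentially the same route as the paper: the paper certifies non-solvability by guessing one disjunct of the expansion~$\mathit{exp}(\ClauseSet)$ --- which for recursion-free linear clauses is precisely one of your polynomial-size refutation chains --- together with a satisfying assignment, placing satisfiability of the expansion in NP and hence solvability in co-NP. The only difference is that the paper cites the equivalence between solvability and unsatisfiability of the expansion from~\cite{disjInterpolantsTR}, whereas you prove it directly; note also that your worry about quantifier elimination is moot, since a \emph{semantic} solution need not be expressible as a \constraint.
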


\begin{proof}
  A set $\ClauseSet$ of recursion-free linear Horn clauses is solvable
  if and only if the expansion~$\mathit{exp}(\ClauseSet)$ is
  unsatisfiable~\cite{disjInterpolantsTR}. For linear clauses,
  $\mathit{exp}(\ClauseSet)$ is a disjunction of (possibly)
  exponentially many formulae, each of which is linear in the size of
  $\mathit{exp}(\ClauseSet)$. Consequently, satisfiability of
  $\mathit{exp}(\ClauseSet)$ is in NP, and unsatisfiability in
  co-NP. \qed
\end{proof}

\begin{lemma}
  Semantic solvability of recursion-free head-disjoint Horn clauses
  over the \constraintLang\ of quantifier-free Presburger arithmetic
  is co-NEXPTIME-hard.
\end{lemma}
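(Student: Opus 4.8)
The plan is to establish co-NEXPTIME-hardness by reducing a NEXPTIME-complete problem to the \emph{complement} of solvability, i.e.\ to satisfiability of the expansion. I rely on the characterization used in the previous lemma: a recursion-free set $\ClauseSet$ has a (semantic) solution if and only if its expansion $\mathit{exp}(\ClauseSet)$ is unsatisfiable~\cite{disjInterpolantsTR}. For a head-disjoint set every \relsym\ is the head of at most one clause, so the expansion is obtained by deterministically inlining each body occurrence with its unique defining body, using fresh copies of all non-argument variables at every occurrence. The result is a single conjunction of constraints whose shape is the tree obtained by unfolding the acyclic clause-dependence DAG; crucially, because head-disjoint clauses need not be body-disjoint, a \relsym\ may be used many times, so this unfolding --- and hence the number of fresh variables and the size of $\mathit{exp}(\ClauseSet)$ --- can be exponential in $|\ClauseSet|$, even though $\ClauseSet$ itself is polynomial. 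Head-disjoint clause sets are therefore exactly succinct, circuit-style descriptions of exponential-size quantifier-free Presburger conjunctions, which is the source of the NEXPTIME behaviour.

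For the reduction I would start from a succinct NEXPTIME-complete problem, e.g.\ SUCCINCT-SAT (a polynomial circuit whose inputs range over $\{0,1\}^n$ enumerates the $2^n$ clauses of a $3$-CNF over $2^n$ variables) or, equivalently, acceptance of a nondeterministic Turing machine within $2^{p(n)}$ steps. The first ingredient is a \emph{doubling gadget}: relation symbols $q_0, q_1, \ldots, q_m$ with a single clause for each $q_k$ (head-disjoint) whose body uses $q_{k-1}$ twice (hence not body-disjoint). Expanding $q_m$ yields a complete binary tree with $2^m$ leaf instances, and by appending one index bit per level through the \relsym\ arguments --- left child $\mapsto 2i$, right child $\mapsto 2i{+}1$, computed by Presburger arithmetic --- each leaf receives a distinct index $i \in \{0, \ldots, 2^m-1\}$ as an argument. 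The fresh per-leaf variables then hold the content of cell~$i$ (a truth value, or a tape symbol/state encoded as a bounded integer), and a local quantifier-free Presburger constraint attached at the leaf enforces well-formedness of that cell.

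The core of the encoding is to force the \emph{global} constraints of the source instance --- that each of the $2^n$ clauses is satisfied, or that consecutive configurations obey the transition relation --- using only the polynomially many local clauses. The only channel of communication between two leaf instances is through the \relsym\ arguments shared along the tree edges: a clause sees the argument tuples of its head and of its body literals, and nothing else. I would exploit this by threading selector and boundary information through the tree, effectively building decoder/multiplexer sub-circuits (again as reused, head-disjoint \relsyms) that route the value stored at an arbitrarily indexed leaf up to the node where a clause referencing that index is checked; the matching of indices is expressed by Presburger equalities on the threaded arguments. Conjoined over the whole expansion, these equalities and transition constraints become satisfiable exactly when the source instance is a yes-instance, so $\ClauseSet$ is \emph{un}solvable iff the NEXPTIME instance accepts, giving co-NEXPTIME-hardness.

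The main obstacle is precisely this routing of information across exponentially separated leaves: since every occurrence carries its own fresh variables, one cannot compare or copy cell contents directly, and all cross-references (variable~$j$ of an arbitrary clause, or the spatial neighbours of a tape cell in the next row) must be realised by polynomial-size, head-disjoint selection networks threaded through the doubling tree. Getting these selectors to align the right pairs of leaves --- without secretly reintroducing body-disjointness or blowing up the clause set --- is the delicate part; I expect most of the technical work, including the verification that the constructed $\mathit{exp}(\ClauseSet)$ is equisatisfiable with the source instance, to concentrate here.
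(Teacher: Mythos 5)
Your high-level strategy matches the paper's: reduce acceptance of an exponential-time-bounded nondeterministic Turing machine (equivalently, a succinct NEXPTIME-complete problem) to satisfiability of the expansion $\mathit{exp}(\ClauseSet)$, exploiting the fact that head-disjointness still permits a relation symbol to occur several times in clause bodies, so that a polynomial-size clause set has an exponential-size expansion; your doubling gadget ($q_k$ defined by a single clause with two occurrences of $q_{k-1}$) is exactly the right mechanism. The genuine gap is in the part you yourself flag as delicate: the ``selector/multiplexer networks'' that route values stored at one leaf of the expansion tree to the exponentially many sites where clause or transition constraints are checked cannot be realised in quantifier-free Presburger arithmetic. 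Your routing scheme faces a dichotomy. Either (i) the routed data is kept in polynomially many bounded-value argument slots, in which case only polynomially many bits of information can cross any edge of the expansion tree, while enforcing $2^n$ cross-references between leaves requires exponentially many bits to cross some such edge; or (ii) exponentially much information is packed into single integer arguments (possible, since variables range over all of $\Z$), in which case the checks must extract the digit at a \emph{variable} position $i$ of a big integer $V$. The latter relation (``bit $i$ of $V$ is $b$'') is not Presburger-definable: it would make the graph of $i \mapsto 2^i$ definable, and the set of powers of $2$ is not semilinear. For the same reason you cannot thread the auxiliary powers $2^i$ or $2^{2^k}$ through your tree: maintaining them level by level requires squaring a variable, i.e., multiplication of two variables, which quantifier-free Presburger arithmetic does not have. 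So ``Presburger equalities on threaded indices'' plus multiplexers do not suffice, and the construction as sketched cannot be completed.

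The paper (adapting the construction of the cited technical report) avoids cross-leaf routing altogether. Whole machine configurations are threaded through the relation-symbol arguments, with the tape encoded head-centred as two integers $L, R$ (contents to the left / right of the head, least significant digit at the head), so that every tape access happens at a \emph{fixed} position and needs only division and remainder by a constant base; a single step of the machine is then one quantifier-free Presburger constraint $\mathit{step}(x,y)$, with the machine's nondeterminism folded into a disjunction inside that constraint --- this is precisely the remark in the paper's proof, and it is what allows one clause per relation symbol, i.e., head-disjointness. Exponentially many steps are obtained by composition-doubling, $p_{k-1}(x,y) \wedge p_{k-1}(y,z) \to p_k(x,z)$, with a final clause $p_{p(n)}(x_{\mathit{init}}, y) \wedge \mathit{accept}(y) \to \mathit{false}$; the expansion is satisfiable iff the machine accepts, giving co-NEXPTIME-hardness of solvability. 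In short, the cross-leaf communication your proposal needs is replaced by purely sequential communication through the shared middle argument $y$; this head-centred, composition-based encoding is the key idea missing from your attempt.
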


\begin{proof}
  The proof given in \cite{disjInterpolantsTR} for
  co-NEXPTIME-hardness of recursion-free Horn clauses over
  quantifier-free Presburger arithmetic can be adapted to only require
  head-disjoint clauses. This is because a single execution step of a
  non-deterministic Turing machine can be expressed as quantifier-free
  Presburger formula. \qed
\end{proof}



\section{Beyond Recursion-free Horn Clauses}
\label{sec:beyond}

It is natural to ask whether the considerations of the last sections
also apply to clauses that are not Horn clauses (i.e., clauses that
can contain multiple positive literals), provided the clauses are
``recursion-free.'' Is it possible, like for Horn clauses, to compute
solutions of recursion-free clauses in general by means of computing
Craig interpolants?

To investigate the situation for clauses that are not Horn, we first
have to generalise the concept of clauses being recursion-free: the
definition provided in Sect.~\ref{horn}, formulated with the help of
the dependence relation~$\to_{\ClauseSet}$, only applies to Horn
clauses. For non-Horn clauses, we instead choose to reason about the
absence of infinite propositional resolution derivations. Because the
proposed algorithms \cite{disjInterpolantsTR} for solving
recursion-free sets of Horn clauses all make use of \emph{exhaustive
  expansion} or \emph{inlining,} i.e., the construction of all
derivations for a given set of clauses, the requirement that no
infinite derivations exist is fundamental.\footnote{We do not take
  subsumption between clauses, or loops in derivations into
  account. This means that a set of clauses might give rise to
  infinite derivations even if the set of derived clauses is
  finite. It is conceivable that notions of subsumption, or more
  generally the application of terminating saturation
  strategies~\cite{FLHT01}, can be used to identify more general
  fragments of clauses for which syntactic solutions can effectively
  be computed. This line of research is future work.}

Somewhat surprisingly, we observe that all sets of clauses without
infinite derivations have the shape of Horn clauses, up to renaming of
\relsyms. This means that procedures handling Horn clauses cover all
situations in which we can hope to compute solutions with the help of
Craig interpolation.

\medskip
Since constraints and \relsym\ arguments are irrelevant for this
observation, the following results are entirely formulated on the
level of propositional logic:
\begin{itemize}
\item a propositional \emph{literal} is either a Boolean variable $p,
  q, r$ (positive literals), or the negation $\neg p, \neg q, \neg r$
  of a Boolean variable (negative literals).
\item a propositional \emph{clause} is a disjunction $p \vee \neg q
  \vee p$ of literals. The multiplicity of a literal is important,
  i.e., clauses could alternatively be represented as multi-sets of
  literals.
\item a \emph{Horn clause} is a clause that contains at most one
  positive literal.
\item given a set~$\ClauseSet$ of Horn clauses, we define the
  dependence relation~$\to_{\ClauseSet}$ on Boolean variables by
  setting \m{p \to_{\ClauseSet} q} if and only if there is a clause in
  $\ClauseSet$ in which $p$ occurs positively, and $q$ negatively
  (like in Sect.~\ref{horn}). The set~$\ClauseSet$ is called
  \emph{recursion-free} if $\to_{\ClauseSet}$ is acyclic.
\end{itemize}

We can now generalise the notion of a set of clauses being
``recursion-free'' to non-Horn clauses:
\begin{definition}
  \label{def:termination}
  A set~$\cal C$ of propositional clauses has the \emph{termination
    property} if no infinite sequence~$c_0, c_1, c_2, c_3, \ldots$
  of clauses exists, such that
  \begin{itemize}
  \item $c_0 \in \cal C$ is an input clause, and
  \item for each $i \geq 1$, the clause~$c_i$ is derived by means of
    binary resolution from $c_{i-1}$ and an input clause, using the
    rule
    \begin{equation*}
      \begin{array}{c}
        C \vee p \qquad D \vee \neg p
        \\\hline
        C \vee D
      \end{array}~.
    \end{equation*}
  \end{itemize}
\end{definition}

\begin{lemma}
  A finite set~$\ClauseSet$ of Horn clauses has the termination
  property if and only if it is recursion-free.
\end{lemma}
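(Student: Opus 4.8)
The plan is to prove both implications by exploiting the acyclicity of $\to_{\ClauseSet}$ through a rank function on Boolean variables. Since $\ClauseSet$ is finite and, in the recursion-free case, $\to_{\ClauseSet}$ is acyclic on a finite variable set, I would assign to each variable~$p$ the length $h(p) \in \N$ of the longest $\to_{\ClauseSet}$-path starting at $p$. Its defining property is that $p \to_{\ClauseSet} q$ implies $h(p) > h(q)$; concretely, in every \emph{input} Horn clause the unique positive literal, if present, has strictly larger rank than each negative literal. I would also record the elementary fact that binary resolution of two Horn clauses again yields a Horn clause, so that along any derivation every $c_i$ has at most one positive literal; this is what lets me speak of \emph{the} positive literal of the current clause.

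For the direction \emph{recursion-free $\Rightarrow$ termination}, I would show that no infinite derivation exists by exhibiting a well-founded measure that strictly decreases at every resolution step. First note that an all-negative (goal) clause is \emph{absorbing}: once $c_i$ has no positive literal, each further step resolves one of its negative literals $\neg p$ against the positive literal of an input clause, replacing $\neg p$ by negative literals of rank strictly below $h(p)$; hence on this part the multiset of negative-literal ranks strictly decreases in the well-founded Dershowitz--Manna multiset order over $\N$. On the maximal prefix where each $c_i$ still carries a positive literal of rank $P_i$, a step either resolves on a negative literal of $c_i$ (leaving $P_i$ unchanged and again shrinking the negative-rank multiset), or resolves away the positive literal $p$; in the latter case either the input clause contributes a new positive literal $s$ with $h(s) > h(p)$ by the rank property, so $P_i$ strictly increases, or the input clause is all-negative and the derivation passes into the absorbing part. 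Because $P_i \le \max_p h(p)$ is bounded, the lexicographic measure $(\max_p h(p) - P_i,\ \text{multiset of negative-literal ranks})$ strictly decreases throughout this prefix. Both the prefix and the absorbing tail are thus finite, and hence so is the derivation.

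For the converse, \emph{recursive $\Rightarrow$ no termination}, I would construct an explicit infinite derivation from a cycle. Pick a simple cycle $p_0 \to_{\ClauseSet} p_1 \to_{\ClauseSet} \cdots \to_{\ClauseSet} p_{k-1} \to_{\ClauseSet} p_0$ with distinct variables, and for each $i$ let $D_i = p_i \vee \neg p_{i+1} \vee R_i$ (indices mod $k$) be an input clause witnessing $p_i \to_{\ClauseSet} p_{i+1}$, where $p_i$ is the unique positive literal and $R_i$ is a disjunction of negative literals. Starting from $c_0 = D_0$, which carries $p_0$ positively and contains $\neg p_1$, I would resolve successively with $D_1, D_2, \ldots, D_{k-1}$: each step replaces the current negative literal $\neg p_{i+1}$ by $\neg p_{i+2} \vee R_{i+1}$ and never touches the positive literal $p_0$, so after these $k-1$ steps the clause carries $p_0$ positively and contains $\neg p_0$. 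A final resolution of this $\neg p_0$ against the positive $p_0$ of $D_0$ reproduces a clause that again carries $p_0$ positively and contains $\neg p_1$, i.e. a clause of exactly the same shape as $c_0$. Repeating this cycle of $k$ steps forever yields an infinite derivation, so the termination property fails.

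The main obstacle I anticipate is the well-foundedness argument in the first direction: unlike the goal-clause case, the rank $P_i$ of the positive literal can genuinely \emph{increase} during a derivation, as the positive literal climbs up the acyclic order, so a naive single rank-multiset does not decrease monotonically. The decomposition into an absorbing all-negative tail and a prefix controlled by the lexicographic measure above is what resolves this, and getting the interplay between the increasing $P_i$ and the decreasing negative-rank multiset right---together with the observation that the rank property is only ever needed on input clauses---is the delicate part. The second direction is by contrast a direct combinatorial construction and should be routine once a simple cycle has been fixed.
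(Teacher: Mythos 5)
Your proof is correct and takes essentially the same route as the paper's: there, termination for the recursion-free case is obtained from a single well-founded order~$\ll$ on Horn clauses that compares the positive literal inversely to the variable order induced by $\to_{\ClauseSet}$ and the negative literals by the Dershowitz--Manna multiset extension, which is exactly your lexicographic measure with your prefix/absorbing-tail split being an unfolded presentation of that one order; the converse is likewise handled by unwinding a dependence cycle into an infinite derivation. No gaps.
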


\begin{proof}
  ``$\Leftarrow$''
  The acyclic dependence relation~$\to_{\ClauseSet}$ induces a strict
  well-founded order~$<$ on Boolean variables: $q \to_{\ClauseSet} p$
  implies $p < q$. The order~$<$ induces a well-founded order~$\ll$ on
  Horn clauses:
  \begin{align*}
    (p \vee C) \ll (q \vee D) &~~\Leftrightarrow~~
    p > q \text{~~or~~} (p = q \text{~and~} C <_{ms} D)
    \\
    C \ll (q \vee D) &~~\Leftrightarrow~~ \mathit{true}
    \\
    C \ll D &~~\Leftrightarrow~~ C <_{ms} D
  \end{align*}
  where $C, D$ only contain negative literals, and $<_{ms}$ is the
  (well-founded) multi-set extension of $<$
  \cite{DBLP:journals/cacm/DershowitzM79}.

  It is easy to see that a clause~$C \vee D$ derived from two Horn
  clauses~$C \vee p$ and $D \vee \neg p$ using the resolution rule is
  again Horn, and $(C \vee D) \ll (C \vee p)$ and $(C \vee D) \ll (D
  \vee \neg p)$.  The well-foundedness of $\ll$ implies that any
  sequence of clauses as in Def.~\ref{def:termination} is
  finite.

  ``$\Rightarrow$'' If the dependence relation~$\to_{\ClauseSet}$ has
  a cycle, we can directly construct a non-terminating sequence~$c_0,
  c_1, c_2, \ldots$ of clauses. \qed
\end{proof}

\begin{definition}[Renamable-Horn~\cite{Lewis:1978:RSC:322047.322059}]
  If $A$ is a set of Boolean variables, and $\cal C$ is a set of
  clauses, then $r_A({\cal C})$ is the result of replacing in $\cal C$
  every literal whose Boolean variable is in $A$ with its
  complement. $\cal C$ is called \emph{renamable-Horn} if there is
  some set $A$ of Boolean variables such that $r_A({\cal C})$ is Horn.
\end{definition}

\begin{theorem}
  If a finite set $\cal C$ of clauses has the termination property,
  then it is renamable-Horn.
\end{theorem}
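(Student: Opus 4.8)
The plan is to prove the contrapositive: if $\cal C$ is not renamable-Horn, then $\cal C$ lacks the termination property, i.e.\ it admits an infinite resolution derivation in the sense of Def.~\ref{def:termination}. The starting point is the classical reduction of renamability to $2$-satisfiability due to Lewis~\cite{Lewis:1978:RSC:322047.322059}. For each Boolean variable $p$ I introduce a renaming variable $a_p$ (true iff $p \in A$), and observe that a literal on $p$ is positive in $r_A({\cal C})$ exactly when $\neg a_p$ holds (if the literal is $p$) or when $a_p$ holds (if the literal is $\neg p$). A clause is Horn after renaming iff no two of its literals are simultaneously positive, so the whole renamability question is equivalent to satisfiability of the $2$-CNF $\Phi$ that contains, for every clause $c \in {\cal C}$ and every unordered pair of distinct literal occurrences $\ell_1, \ell_2$ in $c$, the binary clause stating that $\ell_1$ and $\ell_2$ are not both positive after renaming. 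Thus, if $\cal C$ is not renamable-Horn, then $\Phi$ is unsatisfiable.

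Next I pass from unsatisfiability of $\Phi$ to a cyclic structure that drives resolution. I define the \emph{resolution graph} $G$ on the literals over $\vars({\cal C})$ by putting an edge $\ell \to \ell'$ whenever $\overline{\ell}$ and $\ell'$ occur together in some clause of $\cal C$ as distinct occurrences, where $\overline{\ell}$ is the complement of $\ell$. Chasing the sign bookkeeping above, $G$ is, up to simultaneously complementing both endpoints of every edge, precisely the implication graph of $\Phi$; since complementation is a bijection on literals the two graphs are isomorphic, and in particular one has a directed cycle iff the other does. Because the implication graph of a $2$-CNF is skew-symmetric, an \emph{acyclic} implication graph has only singleton strongly connected components and the $2$-CNF is therefore satisfiable; contrapositively, unsatisfiability of $\Phi$ forces a directed cycle in its implication graph, hence a directed cycle $m_0 \to m_1 \to \cdots \to m_{t-1} \to m_0$ in $G$.

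Finally I turn this cycle into an infinite derivation. Each edge $m_j \to m_{j+1}$ is witnessed by an input clause $d_j \in {\cal C}$ containing both $\overline{m_j}$ and $m_{j+1}$ as distinct occurrences. A single resolution step of a clause containing $m_j$ against $d_j$, resolving on the variable of $m_j$, deletes one occurrence of $m_j$ and contributes the remainder of $d_j$, so the resolvent again contains $m_{j+1}$. Starting from $c_0 = d_{t-1}$, which contains $m_0$, and cycling through $d_0, d_1, \ldots, d_{t-1}, d_0, \ldots$ therefore yields an infinite sequence $c_0, c_1, c_2, \ldots$ of exactly the form forbidden by Def.~\ref{def:termination}, contradicting the assumption that $\cal C$ has the termination property. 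The one delicate point is that $m_{j+1}$ survives in the resolvent: this holds because $\overline{m_j}$ and $m_{j+1}$ are guaranteed to be distinct occurrences, so removing the resolved occurrence of $\overline{m_j}$ leaves $m_{j+1}$ intact, even in the degenerate case $m_{j+1} = \overline{m_j}$.

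I expect the main obstacle to be the sign bookkeeping that identifies the resolution graph $G$ with the implication graph of the renaming $2$-SAT instance $\Phi$: one must track precisely how flipping a variable turns a literal occurrence positive or negative, and then check that one resolution step corresponds to traversing a single edge of $G$. The auxiliary facts — that ``at most one positive literal'' decomposes into pairwise constraints, and that an acyclic $2$-SAT implication graph is satisfiable — are routine, and the construction of the derivation is mechanical once a directed cycle in $G$ is available. Note also that the argument only establishes the stated direction: a renamable-Horn set may itself be ``recursive'' and hence admit a cycle in $G$, which is consistent, since the theorem does not claim that renamability implies termination.
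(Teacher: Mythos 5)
Your argument is correct, and its combinatorial core coincides with the paper's: you construct exactly the paper's graph on literals (an edge $\ell \to \ell'$ whenever some clause contains $\overline{\ell}$ and $\ell'$ as distinct occurrences), and you convert a directed cycle in that graph into an infinite resolution derivation, with the same multiset bookkeeping that keeps $m_{j+1}$ alive after resolving away $\overline{m_j}$; this cycle-to-derivation step is essentially verbatim the paper's proof that its graph is acyclic. Where you genuinely diverge is in how acyclicity yields renamability. The paper stays elementary and constructive: from acyclicity it picks a compatible strict total order $<$, proves the claim that any compatible order with $p < \neg p$ for every variable forces the clause set to be Horn, and then reads off the renaming explicitly as $A = \{p \mid \neg p < p\}$, observing that renaming merely swaps the vertices $p$ and $\neg p$. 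You instead run the whole argument contrapositively through 2-SAT: Lewis's reduction \cite{Lewis:1978:RSC:322047.322059} produces a 2-CNF $\Phi$ whose satisfiability is equivalent to renamability, your sign bookkeeping identifies the literal graph with the implication graph of $\Phi$, and the Aspvall--Plass--Tarjan characterisation turns unsatisfiability of $\Phi$ into a directed cycle. This makes rigorous a connection the paper only mentions in a footnote (its graph \emph{is} the implication graph of Lewis's 2-SAT instance), and it is more modular; the price is that the decisive step is outsourced to 2-SAT machinery whose own proof is essentially the topological-order construction the paper inlines, and the renaming is obtained only implicitly (as a model of $\Phi$) rather than by an explicit formula. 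One presentational caution: you need, and use, only the direction ``acyclic implication graph implies $\Phi$ satisfiable''; the converse fails in this setting, because a clause containing both $p$ and $\neg p$ yields tautological pairs and hence self-loops in the implication graph while $\Phi$ remains satisfiable --- which is exactly the renamable-Horn-but-nonterminating situation your closing remark correctly identifies as consistent with the theorem.
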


\begin{proof}
  Suppose $\cal C$ is formulated over the (finite) set~$p_1, p_2,
  \ldots, p_n$ of Boolean variables.  We construct a graph~$(V, E)$,
  with $V = \{p_1, p_2, \ldots, p_n, \neg p_1, \neg p_2, \ldots, \neg
  p_n\}$ being the set of all possible literals, and $(l, l') \in E$
  if and only if there is a clause~$\neg l \vee l' \vee C \in \cal C$
  (that means, a clause containing the literal~$l'$, and the
  literal~$l$ with reversed sign).\footnote{This graph could
    equivalently be defined as the implication graph of the 2-sat
    problem introduced in \cite{Lewis:1978:RSC:322047.322059}, as a
    way of characterising whether a set of clauses is Horn.}

  The graph~$(V, E)$ is acyclic. To see this, suppose there is a
  cycle~$l_1, l_2, \ldots, l_m, l_{m+1} =l_1$ in $(V, E)$. Then there
  are clauses~$c_1, c_2, \ldots, c_m \in \cal C$ such that each $c_i$
  contains the literals~$\neg l_i$ and $l_{i+1}$. We can then
  construct an infinite sequence~$c_1 = d_0, d_1, d_2, \ldots$ of
  clauses, where each $d_i$ (for $i > 1$) is obtained by resolving
  $d_{i-1}$ with $c_{(i \operatorname{mod} m)+1}$, contradicting the
  assumption that $\cal C$ has the termination property.

  Since $(V, E)$ is acyclic, there is a strict total order~$<$ on $V$
  that is consistent with $E$, i.e., $(l, l') \in E$ implies $l < l'$.

  \emph{Claim:} if $p < \neg p$ for every Boolean variable~$p
  \in \{p_1, p_2, \ldots, p_n\}$, then $\cal C$ is Horn.

  \emph{Proof of the claim:} suppose a non-Horn clause $p_i \vee p_j
  \vee C \in \cal C$ exists (with $i \not= j$). Then $(\neg p_i, p_j)
  \in E$ and $(\neg p_j, p_i) \in E$, and therefore $\neg p_i < p_j$
  and $\neg p_j < p_i$. Then also $\neg p_i < p_i$ or $\neg p_j <
  p_j$, contradicting the assumption that $p < \neg p$ for every
  Boolean variable~$p$.

  In general, choose $A = \{ p_i \mid i \in \{1, \ldots, n\}, \neg p_i
  < p_i \}$, and consider the set~$r_A({\cal C})$ of clauses. The
  set~$r_A({\cal C})$ is Horn, since changing the sign of a Boolean
  variable~\m{p \in A} has the effect of swapping the nodes~$p, \neg
  p$ in the graph~$(V, E)$. Therefore, the new graph~$(V, E')$ has to
  be compatible with a strict total order~$<$ such that $p < \neg p$
  for every Boolean variable~$p$, satisfying the assumption of the
  claim above.
  \qed
\end{proof}

\begin{example}
  We consider the following set of clauses:
  \begin{equation*}
    {\cal C} = \{
      \neg a \vee s,\;  a \vee \neg p,\;  p \vee \neg b,\;  b \vee p \vee r,\;  \neg p \vee q
    \}
  \end{equation*}
  By constructing all possible derivations, it can be shown that the
  set has the termination property.  The graph~$(V, E)$, as
  constructed in the proof, is:
  \begin{center}
    \begin{tikzpicture}[node distance=3ex]
      \node (np) {$\neg p$};
      \node[below left=of np] (na) {$\neg a$};
      \node[below=of na] (ns) {$\neg s$};
      \node[below right=of np] (nq) {$\neg q$};
      \node[above left=of np] (b) {$b$};
      \node[above right=of np] (nb) {$\neg b$};
      \node[above=7ex of np] (p) {$p$};
      \node[above right=of p] (a) {$a$};
      \node[above left=of p] (q) {$q$};
      \node[above=of a] (s) {$s$};
      \node[above right=of nb] (r) {$r$};
      \node[below left=of b] (nr) {$\neg r$};

      \draw[->] (ns) -- (na);
      \draw[->] (na) -- (np);
      \draw[->] (nq) -- (np);
      \draw[->] (np) -- (b);
      \draw[->] (np) -- (nb);
      \draw[->] (np) edge[bend right=40] (r);
      \draw[->] (b) -- (p);
      \draw[->] (nb) -- (p);
      \draw[->] (nb) -- (r);
      \draw[->] (p) -- (q);
      \draw[->] (p) -- (a);
      \draw[->] (a) -- (s);
      \draw[->] (nr) -- (b);
      \draw[->] (nr) edge[bend left=40] (p);
    \end{tikzpicture}
  \end{center}
  A strict total order that is compatible with the graph is:
  \begin{equation*}
    \neg s < \neg q < \neg r < \neg a < \neg p < b < \neg b < r < p < q < a < s
  \end{equation*}
  From the order we can read off that we need to rename the variables
  $A = \{s, q, r, a, p\}$ in order to obtain a set of Horn clauses:
  \begin{equation*}
    r_A({\cal C}) = \{
    a \vee \neg s,\;
    \neg a \vee p,\;  \neg p \vee \neg b,\;
    b \vee \neg p \vee \neg r,\;  p \vee \neg q
    \}
  \end{equation*}
\end{example}




{\raggedright

}

\appendix

\end{document}
